%% file: main.tex
\documentclass[lettersize,journal]{IEEEtran}

\usepackage{cite}
\usepackage{amsmath,amsfonts,amssymb}
\usepackage{algorithmicx}
\usepackage{algpseudocode}
\usepackage{algorithm}
\usepackage{array}
\usepackage[caption=false,font=normalsize,labelfont=sf,textfont=sf]{subfig}
\usepackage{textcomp}
\usepackage{stfloats}
\usepackage{url}
\usepackage{verbatim}
\usepackage{graphicx}
\usepackage{cite} 

\usepackage{times}  
\usepackage{helvet} 
\usepackage{courier}
\usepackage{colortbl}
\usepackage{hhline}
\usepackage[T1]{fontenc}
\usepackage{rotating}
\usepackage{microtype}
\usepackage{xcolor}
\usepackage{multirow}
\usepackage[utf8]{inputenc}
\usepackage{booktabs}     
\usepackage{nicefrac}   
\usepackage{latexsym}
\usepackage{tabularx}
\usepackage{inconsolata}
\usepackage{newfloat}
\usepackage{listings}
\usepackage{bibentry}
\usepackage{amsthm}
\usepackage{thmtools}
\usepackage{orcidlink}
\usepackage{hyperref}

\usepackage{tikz}
\usetikzlibrary{calc}
\usepackage{pgfmath}
\definecolor{monte_carlo}{RGB}{103,194,172}

\theoremstyle{plain}
\newtheorem{theorem}{Theorem}[section]
\newtheorem{lemma}[theorem]{Lemma}
\newtheorem{proposition}[theorem]{Proposition}

\theoremstyle{definition}

\newtheorem{assumption}[theorem]{Assumption}

\theoremstyle{remark}

\begin{document}

\title{Membership Inference for Contrastive Pre-training Models with Text-only PII Queries}

\author{
    Ruoxi Cheng\textsuperscript{*}~\orcidlink{0009-0001-3261-642X}, 
    Yizhong Ding\textsuperscript{*}~\orcidlink{0009-0001-1339-9597}, 
    Jian Zhao\textsuperscript{*}~\orcidlink{0000-0002-3508-756X}, 
    Hongyi Zhang~\orcidlink{0009-0004-1415-4761}, 
    Haoxuan Ma~\orcidlink{0009-0003-9159-8764},  \\
    Tianle Zhang\textsuperscript{\dag}~\orcidlink{0000-0003-4881-2406},
    Yiyan Huang\textsuperscript{\dag}~\orcidlink{0000-0003-1268-2208},
    and Xuelong Li~\orcidlink{0000-0002-0019-4197},~\IEEEmembership{Fellow,~IEEE}
\thanks{\textsuperscript{*}Co-first authors. \textsuperscript{\dag}Corresponding authors.}%
\thanks{R. Cheng is with the Beijing Electronic Science and Technology Institute, Beijing 100070, China, and also with the Institute of Artificial Intelligence, China Telecom (TeleAI), Beijing 100032, China.}%
\thanks{Y. Ding is with the Beijing Electronic Science and Technology Institute, Beijing 100070, China.}%
\thanks{J. Zhao, T. Zhang, and X. Li are with the Institute of Artificial Intelligence, China Telecom (TeleAI), Beijing 100032, China. (e-mail: zhangtl15@chinatelecom.cn)}%
\thanks{H. Zhang is with Nanyang Technological University, Singapore 637371, Singapore.}%
\thanks{H. Ma is with Nanjing University, Nanjing 210023, China.}%
\thanks{Y. Huang is with the School of Computing and Information Technology, Great Bay University, Dongguan 523000, China. (e-mail: huangyiyan@gbu.edu.cn)}%
}

\markboth{Preprint, under review.}%
{Cheng \MakeLowercase{\textit{et al.}}: Membership Inference for Contrastive Pre-training Models with Text-only PII Queries}


\maketitle

\begin{abstract}
Contrastive pretraining models such as CLIP and CLAP, serve as the ubiquitous perceptual backbones for modern multimodal large models, yet their reliance on web-scale data raises growing concerns about memorizing Personally Identifiable Information (PII). Auditing such models via membership inference is challenging in practice: shadow-model MIAs are computationally prohibitive for large multimodal backbones, and existing multimodal auditing methods typically require querying the target with paired biometric inputs, thereby directly exposing sensitive biometric information to the target model. To bypass this critical limitation, we demonstrate a highly desirable capability for privacy auditing: multimodal memorization within these foundational encoders can be accurately inferred using exclusively the text modality. We propose \textbf{U}nimodal \textbf{M}embership \textbf{I}nference \textbf{D}etector (UMID), a text-only auditing framework that performs text-guided cross-modal latent inversion and extracts two complementary signals, \textit{similarity} (alignment to the queried text) and \textit{variability} (consistency across randomized inversions). UMID compares these statistics to a lightweight non-member reference constructed from synthetic gibberish and makes decisions via an ensemble of unsupervised anomaly detectors.
Comprehensive experiments across diverse CLIP and CLAP architectures demonstrate that UMID significantly improves the effectiveness and efficiency over prior MIAs, delivering strong detection performance with sub-second auditing cost using solely text queries, completely circumventing the need for biometric inputs and complying with strict privacy constraints.
\end{abstract}

\begin{IEEEkeywords}
Membership Inference, Contrastive Pre-training, Privacy Leakage, Anomaly Detection.
\end{IEEEkeywords}


\maketitle

\input{1_introduction_revised}

\input{2_relatwork_revised}

\input{3_methodology_revised}

\input{4_evaluation_revised}

\input{5_defense_revised}

\input{6_conclusion}

\input{7_contributions}


\bibliographystyle{IEEEtran}
\bibliography{main}


\input{9_appendix}

\end{document}

%% file: 1_introduction_revised.tex
\section{Introduction}
\label{sec:intro}
Contrastive foundation models \cite{yuan2021multimodal}, exemplified by CLIP \cite{radford2021learning} and CLAP \cite{elizalde2023clap}, have become the ubiquitous perceptual backbones for modern Multimodal Large Language Models (MLLMs) \cite{cheng2025ecoalign} and various vision/audio--language systems. Their strong performance, however, is largely enabled by web-scale training data that can contain Personally Identifiable Information (PII) \cite{schwartz2011pii,sun2025better}, raising increasing concerns about PII leakage and downstream misuse \cite{xi2024defending,hu2023defenses,pei2025selfprompt,zhao2025strata}. Recent studies further suggest that multimodal models may inadvertently memorize membership signals or identity attributes, exposing individuals to privacy risks when sensitive biometric data (e.g., faces or voices) appear in the training set \cite{10095737}. Consequently, auditing these multimodal models for membership leakage has become a critical imperative, not only to meet regulatory requirements, but also to provide an accessible safeguard that enables the public to assess potential data exposure in widely deployed foundation models without technical barriers \cite{li2025tuni}.

Current auditing protocols primarily rely on Membership Inference Attacks (MIAs) \cite{shokri2017membership}, which aim to determine whether a specific data sample was used for model training. However, when applied to contrastive pretraining models, existing MIAs face two critical challenges. \textbf{(1) Computationally prohibitive costs:} standard MIAs require training multiple shadow models to approximate the target model’s behavior, which is often computationally infeasible for large-scale multimodal architectures \cite{jagielski2024students}. Even heuristic alternatives, such as cosine similarity-based attacks \cite{ko2023} or self-influence functions \cite{cohen2024membership}, either reduce detection accuracy or still demand substantial computational resources \cite{oh2023membership}. \textbf{(2) The “auditor’s dilemma”:} existing multimodal MIAs typically assume access to explicitly paired inputs, such as a face image with a name or a speaker’s voice with textual metadata, to query the target model \cite{ko2023}, creating a fundamental paradox because third-party auditors (e.g., regulators) usually do not possess, and should not handle, users’ sensitive biometric data. Even worse, submitting such sensitive inputs to an untrusted model can introduce new exposure risks, undermining the very purpose of the privacy audit \cite{hu2022m}.

To resolve the auditor's dilemma, a text-only, unimodal gray-box auditing setting is necessitated by two primary compliance scenarios: (1) \textbf{Regulatory Auditing:} Agencies auditing open-weight models for PII (e.g., under the EU AI Act) often have weight access but are legally prohibited from handling citizens' raw biometric data. (2) \textbf{Third-party Compliance:} External auditors may receive model weights yet remain barred from sensitive bimodal user databases by regulations like GDPR/HIPAA. UMID addresses these by eliminating the need for both computationally expensive shadow models and privacy-violating bimodal queries.

These challenges underscore an urgent need for auditing methods that avoid querying the target model with matched multimodal pairs \cite{liu2024multimodal}. In this paper, we reveal a key insight that resolves this dilemma: the presence of PII in the multimodal training set can be accurately audited using exclusively the text modality. By exploiting the cross-modal alignment behavior of these foundational encoders, we demonstrate that privacy-preserving, single-modality probing is sufficient to detect severe memorization. To this end, we study MIAs in a more realistic identity-level auditing setting. Specifically, we assume each training sample consists of a textual PII description $t_i$ (e.g., a name) and a corresponding non-textual modality $m_i$ (e.g., a face image for CLIP or a voice recording for CLAP). Given only a target text query $t$, the auditor seeks to determine whether there exists any training sample $(t_i, m_i)$ such that $t_i = t$. Crucially, this must be done under a strict \textbf{unimodal constraint}: even if the auditor possesses real biometric samples, they are prohibited from submitting them to the target model. This restriction enables lightweight, text-only queries that mitigate privacy leakage risk and are practical for third-party deployment (Figure \ref{fig:intro}), proving that auditors can effectively assess multimodal privacy risks without ever exposing sensitive biometric data.

\begin{figure*}[t]
    \centering
    \subfloat[\small Comparison of UMID and traditional MIA methods. Traditional approaches rely on costly shadow models or bimodal inputs (risking secondary privacy leakage), whereas UMID safely formulates membership inference as anomaly detection using text-only queries.\label{fig:intro}]{\includegraphics[width=0.48\textwidth]{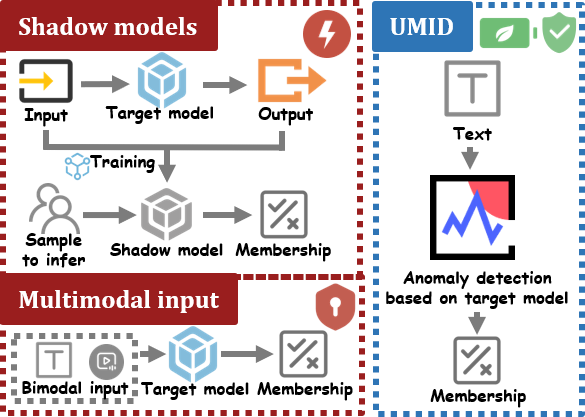}}
    \hfill
    \subfloat[\small Visualization of geometric separation. The extracted similarity and variability features via latent inversion exhibit a clear distributional gap between samples within and outside the training dataset of the target model.\label{fig:separation}]{\includegraphics[width=0.48\textwidth]{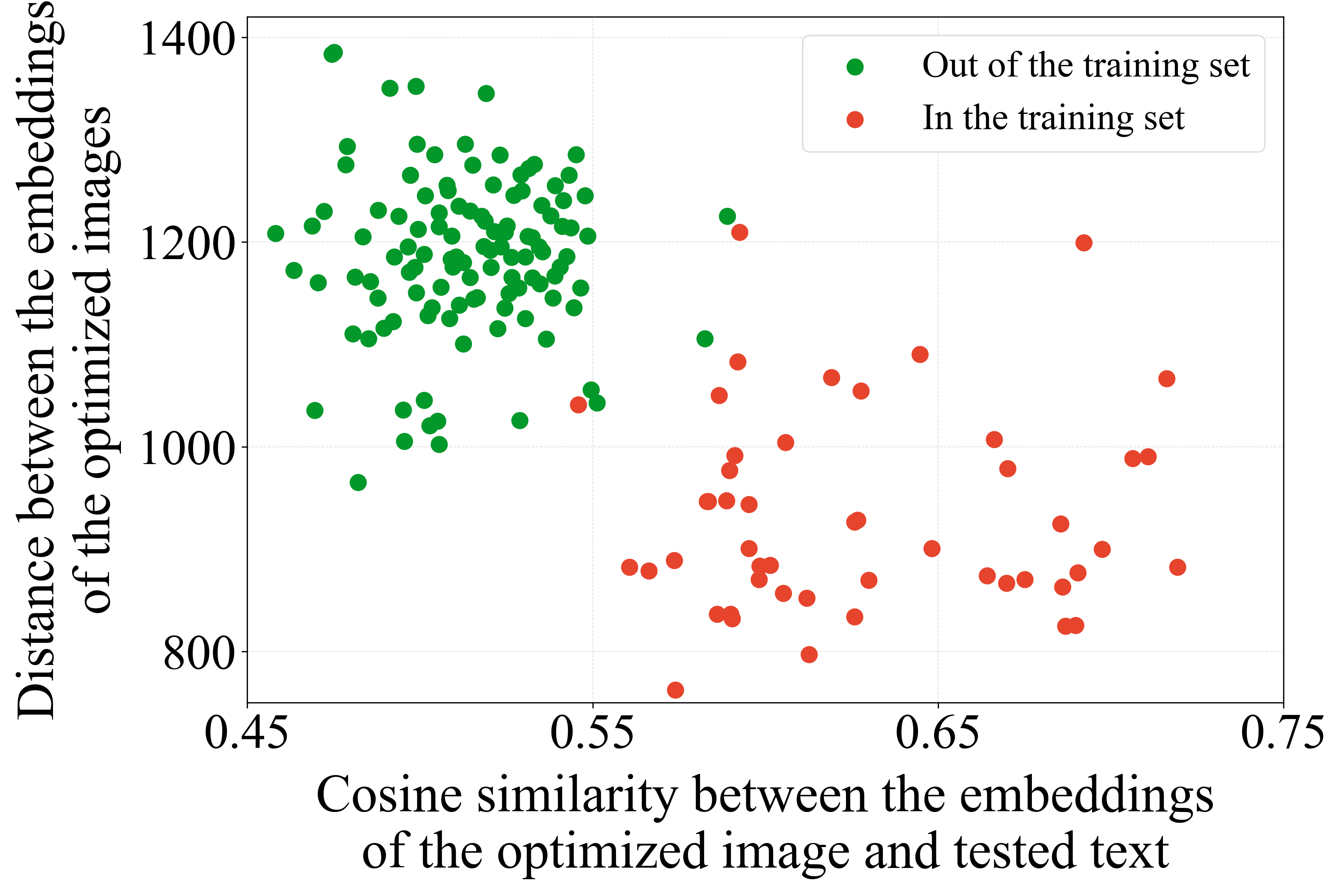}}
    
    \caption{Overview of the UMID auditing framework and the resulting distributional gap. The UMID method enables text-only membership inference (a) and achieves clear geometric separation between training and non-training samples (b).}
\end{figure*}

In this paper, we propose \textbf{U}nimodal \textbf{M}embership \textbf{I}nference \textbf{D}etector (UMID), a text-only query framework that detects PII memorization in contrastive models without exposing sensitive biometric inputs or training costly shadow models. Given a target text description, UMID performs cross-modal latent inversion with multiple randomized initializations and extracts two key statistics: (1) \textbf{similarity}, quantified as the average cosine alignment between the optimized embedding and the queried text embedding; and (2) \textbf{variability}, measured by the mean squared dispersion of the optimized embeddings across runs. UMID then compares these statistics against a lightweight non-member baseline built from synthetic textual “gibberish” and makes a membership decision using a voting ensemble of anomaly detectors. The intuition behind this design is straightforward yet effective: texts corresponding to identities seen during training (members) yield more accurate and consistent modal reconstructions under inversion than unseen texts (non-members). This discrepancy creates a distinct distributional separation that can be reliably detected, as illustrated in Figure \ref{fig:separation}.

\paragraph{Contributions} Our contributions are three-fold:
\begin{itemize}
	\item We introduce UMID, a membership inference detector for contrastive pretraining models that operates solely on textual PII queries. UMID eliminates costly shadow-model training and thus overcomes the computationally prohibitive costs faced by prior MIAs. Moreover, UMID avoids querying the target model with explicitly paired multimodal inputs, resolving the auditor’s dilemma and adhering to the principle of multimodal data protection.

	\item We provide both theoretical and empirical evidence showing that member and non-member texts induce well-separated feature distributions under text-guided cross-modal inversion. This enables membership inference via unsupervised anomaly detection using randomly generated textual gibberish.

	\item We conduct comprehensive experiments across diverse CLIP and CLAP architectures, demonstrating that UMID consistently outperforms existing MIA methods in both effectiveness and efficiency, even though it queries with text alone, rather than paired multimodal inputs.
\end{itemize}

%% file: 2_relatwork_revised.tex
\section{Related Work}

\paragraph{Contrastive pretraining models}
Multimodal contrastive pretraining \cite{radford2021learning} has become a core framework for learning unified representations across heterogeneous modalities, including images, audio, and natural language \cite{li2025closer}. Representative methods, such as CLIP \cite{khosla2020supervised} and CLAP \cite{wu2025collap}, train dual encoders with contrastive objectives on web-scale paired data to construct a shared embedding space, where semantically matched pairs are pulled together while mismatched pairs are pushed apart. Such an alignment in a shared embedding space enables strong zero-shot transfer, cross-modal retrieval, and a broad range of downstream applications without fine-tuning via task-specific supervision \cite{tsai2025construction}. Recent work further improves contrastive pretraining by strengthening cross-modal alignment and scalability \cite{cao2025supervised,hunt2025contrastive}, introducing codebook-based representations, and leveraging larger and more curated audio--text resources \cite{xie2025audiotime}. Despite these advances, the scale and opacity of the underlying training corpora raise growing concerns \cite{hu2022m}: sensitive information present in the data may be implicitly retained in learned embeddings, leading to privacy risks \cite{golatkar2022mixed}.

\paragraph{PII leakage in MLLMs}
The unprecedented capabilities of Multimodal Large Language Models (MLLMs) stem from extensive pretraining on massive and weakly supervised datasets \cite{ashqar2025advancing,cheng2025pbi,pang2026steering,xiao2025pixclip}. Such model training processes, however, inevitably introduces privacy risks as these web-scale datasets frequently contain personally identifiable information (PII), including faces, names, and voiceprints \cite{liu2025protecting,tran2025privacypreserving}. A growing number of studies demonstrate that multimodal encoders and associated LLM components can memorize identity-linked features and sensitive attributes, rendering them susceptible to various privacy attacks such as model stealing, knowledge extraction, data reconstruction, and membership inference  \cite{ko2023}. These risks are especially pronounced in  contrastive frameworks such as CLIP, where strong cross-modal alignment allows adversaries to elicit sensitive identity associations through crafted queries  \cite{kim2024}. Consequently, developing rigorous auditing protocols to detect such representation-level PII leakage remains a critical open problem \cite{pham2025never}.

\paragraph{Membership inference attacks}
Membership inference attacks (MIAs) serve as a fundamental and widely-used instrument for quantifying privacy leakage by ascertaining whether a specific data sample was included in a model's training set \cite{shokri2017membership}. Conventional MIA pipelines typically rely on the training of shadow models to approximate the target model's decision boundaries \cite{shokri2017membership} or the exploitation of loss-based confidence signals \cite{wang2025membership}. However, as architectures of modern large models usually scale to billions of parameters \cite{cheng2024reinforcement,duan2025oyster,cao2025agr,cheng2025inverse}, these strategies have become computationally prohibitive and often impractical \cite{9806361}. Recent efforts have adapted MIAs to contrastive models such as CLIP by leveraging cross-modal alignment behavior, including identity detection inference attacks (IDIA) \cite{hintersdorf2024does}, cosine similarity attacks (CSA) \cite{ko2023}, and weak supervision attacks (WSA) \cite{samira2025variance}. While these approaches successfully reduce or eliminate the need for shadow training, they paradoxically necessitate querying the target model with  biometric inputs (e.g., facial images or voice recordings) during inference \cite{tao2025range}. This highlights an urgent need for novel membership detectors that can operate effectively without involving sensitive bimodal PII.

%% file: 3_methodology_revised.tex
\section{Method}

\subsection{Problem Formulation}
\paragraph{Auditing scenario}
Consider a multimodal contrastive pretraining model $\mathcal{M}$ (e.g., CLIP \cite{khosla2020supervised} or CLAP \cite{wu2025collap}) pre-trained on a dataset $D_{\text{train}}$.
Each training sample $s_i=(t_i,m_i)$ represents a paired identity (i.e., PII) of an individual, where $t_i$ denotes a textual description (e.g., a person name) and $m_i$ denotes a corresponding non-textual modality (e.g., a face image for CLIP or a voice recording for CLAP). The objective of the detector is to perform \textit{identity-level membership inference}: given a target identity $t$, the detector aims to determine whether the PII sample $(t, m)$ was present in $\mathcal{D}_{\text{train}}$. Notably, this is conducted under a \textit{unimodal privacy constraint}: the auditor queries the model $\mathcal{M}$ using only textual descriptions. Even if the auditor possesses real biometric samples of the target, they are prohibited from submitting them to $\mathcal{M}$ to prevent secondary privacy leakage, which we term as \textit{the auditor's dilemma} in Section \ref{sec:intro}.

\paragraph{Detector's capability} We operate under a gray-box model auditing setting, where the detector is assumed to have query and gradient access to the frozen encoders, $\phi_{\text{text}}$ and $\phi_{\text{mod}}$ to perform the latent inversion described in Section \ref{sec:feature_extraction}.
Crucially, this assumption does not require access to the original training pipeline: the auditor remains completely oblivious to the private training dataset $\mathcal{D}_{\text{train}}$, the specific training hyperparameters, or the optimizer states.
This setting aligns with standard safety auditing protocols for open-weight foundation models, where model parameters are accessible but training data remains proprietary.

\begin{figure*}[t]
	\centering
	\includegraphics[width=\textwidth]{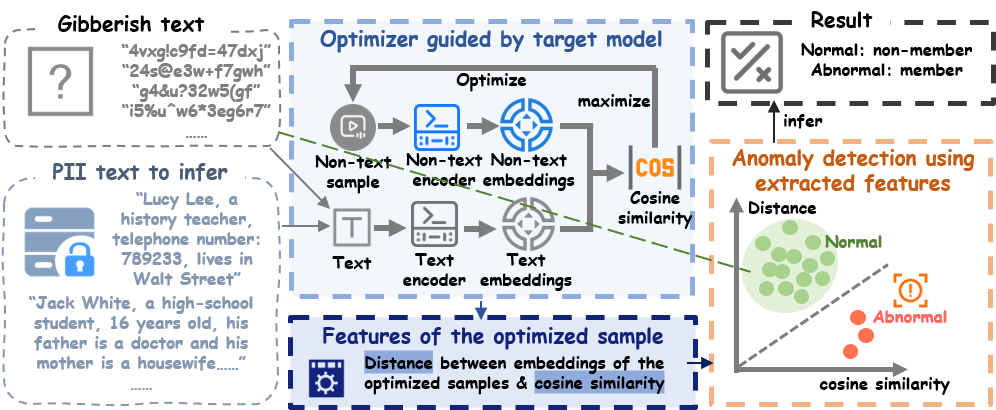}
	\caption{\small Pipeline of UMID. We employ an optimizer guided by target model to align non-text embeddings with PII text embeddings, maximizing their cosine similarity. By analyzing similarity and variability features of these optimized samples relative to the synthetic gibberish baseline, an anomaly detection system identifies abnormal patterns to infer the membership of the input text.}
	\label{fig:pipeline}
\end{figure*}

\subsection{UMID: Unimodal Membership Inference Detector}
We propose UMID, a framework that reformulates membership inference as an unsupervised anomaly detection task based on cross-modal latent inversion. As illustrated in Figure \ref{fig:pipeline}, UMID operates in three logical steps: (1) extracting membership signatures via randomized modality inversion; (2) grounding the separability of these signatures via geometric theory; and (3) performing membership inference via anomaly detection calibrated on a semantic-null baseline.

\subsubsection{Feature Extraction via Latent Inversion}
\label{sec:feature_extraction}

For a target textual identity $t$, UMID elicits its membership signature by optimizing a modality input $x$ to align with the text embedding $v_t = \phi_{\text{text}}(t)$. Specifically, we perform $n$ independent optimization runs. In each run $i \in \{1, \dots, n\}$, we initialize $x_0^{(i)}$ from random noise and iteratively update it via gradient ascent to maximize the cosine similarity $S(x, t) = \frac{\phi_{\text{mod}}(x)^\top v_t}{\|\phi_{\text{mod}}(x)\| \|v_t\|}$. Let $v^{(1)}(t), \dots, v^{(n)}(t)$ denote the final optimized embeddings. We extract two trajectory statistics as the feature representation $f(t)$:

\begin{enumerate}
	\item \textbf{Similarity ($S_n$):} Measured by the average final similarity across $n$ runs, reflecting the model's ability to reconstruct a modality counterpart for $t$:
	\begin{equation} \label{eq:bar-v-def}
		S_n(t) := v_t^\top \left( \frac{1}{n}\sum_{i=1}^n v^{(i)}(t) \right).
	\end{equation}
	\item \textbf{Variability ($D_n^2$):} Measured by the mean squared dispersion of the optimized embeddings, capturing the consistency of the optimization landscape:
\begin{equation} \label{eq:Dn-def}
\begin{split}
    D_n^2(t) := & \frac{1}{n}\sum_{i=1}^n \|v^{(i)}(t) - \bar{v}_n(t)\|_2^2, \\
    &\text{where } \bar{v}_n(t) = \frac{1}{n}\sum_{i=1}^n v^{(i)}(t).
\end{split}
\end{equation}
\end{enumerate}

Note that $S_n(t)$ and $D_n^2(t)$ serve as proxies for alignment and consistency, respectively. Specifically, a higher $S_n(t)$ indicates higher similarity, while a lower $D_n^2(t)$ signifies lower variability (higher stability). Intuitively, member identities tend to yield higher similarity and lower variability due to stronger training-induced alignment. We formalize this intuition in the following theoretical analysis.

\subsubsection{Geometric Separation with Theoretical Justification} \label{sec:theory}

We provide a theoretical guarantee that the statistics $(S_n, D_n^2)$ fundamentally separate members from non-members. We model the latent space using $K$ unit-norm prototypes $\{\mu_k\}_{k=1}^K$. The insights are motivated by distinct geometric behaviors: 
For a \textbf{non-member} $t_{\mathrm{out}}$, its embedding is \textit{isotropic} relative to the prototypes, implying that the optimization explores the prototype space uniformly ($p_k(t_{\mathrm{out}}) \approx 1/K$).
In contrast, a \textbf{member} $t_{\mathrm{in}}$ aligns with a specific prototype $\mu_{y^\star}$ with a \textit{margin} $\gamma$. This alignment forces the optimization to concentrate exponentially on $y^\star$ ($p_{y^\star}(t_{\mathrm{in}}) \approx 1$). These properties are well-grounded in contrastive learning mechanics: the training objective (e.g., InfoNCE) explicitly optimizes for high cosine similarity (margin) for members, whereas unseen non-members naturally exhibit high-dimensional isotropy.

Based on these properties, we define the population-level statistics $S_\infty(t)$ and $D_\infty^2(t)$, which capture the ideal behavior as $n \to \infty$. Members exhibit high alignment ($S_\infty \approx \gamma$) and low dispersion ($D_\infty^2 \approx 0$), whereas non-members exhibit low alignment ($S_\infty \approx 0$) and high dispersion ($D_\infty^2 \approx 1 - 1/K$). The following theorem provides a finite-sample guarantee for this separation rule.
\begin{theorem}[Finite-sample geometric separation]\label{thm:main}
	Given a member text $t_{\mathrm{in}}$ and a non-member text $t_{\mathrm{out}}$ satisfying the geometric properties above.
	Let $\Delta_S:=S_\infty(t_{\mathrm{in}})-S_\infty(t_{\mathrm{out}})$ and $\Delta_D:=D_\infty^2(t_{\mathrm{out}})-D_\infty^2(t_{\mathrm{in}})$ be the population gaps, and define $\Gamma:=\min\{\Delta_S,\Delta_D\}>0$.
	For sufficiently large embedding dimension $d=\Omega(\log(K/\delta))$ and number of runs $n=\Omega(\Gamma^{-2}\log(1/\delta))$, there exist thresholds $s_{\mathrm{thr}}$ and $d_{\mathrm{thr}}^2$ such that with probability at least $1-\delta$,
\begin{equation}
    \begin{aligned}
        S_n(t_{\mathrm{in}}) \ge s_{\mathrm{thr}}, \quad D_n^2(t_{\mathrm{in}}) \le d_{\mathrm{thr}}^2, \\
        S_n(t_{\mathrm{out}}) \le s_{\mathrm{thr}}, \quad D_n^2(t_{\mathrm{out}}) \ge d_{\mathrm{thr}}^2.
    \end{aligned}
\end{equation}
\end{theorem}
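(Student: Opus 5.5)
We model each run's final embedding as a random variable: with probability $p_k(t)$ the inversion for run $i$ converges near prototype $\mu_k$, so $v^{(i)}(t)=\mu_{Y_i}+\xi_i$. Here $Y_1,\dots,Y_n$ are i.i.d.\ with law $p(t)$ and $\xi_i$ is a bounded perturbation. The proof reduces both statistics to empirical averages of bounded i.i.d.\ quantities and applies concentration. The population quantities are then
\[
S_\infty(t)=\sum_k p_k(t)\,v_t^\top\mu_k,\qquad D_\infty^2(t)=\mathbb{E}\|v^{(1)}-\mathbb{E}v^{(1)}\|^2 .
\]
When the perturbation is negligible, $D_\infty^2=1-\|\sum_k p_k\mu_k\|^2$.

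The first step fixes the population values. For $t_{\mathrm{in}}$, the margin gives $v_t^\top\mu_{y^\star}\ge\gamma$ and $p_{y^\star}\approx 1$, hence $S_\infty\approx\gamma$ and $D_\infty^2\approx 0$. For $t_{\mathrm{out}}$, isotropy gives $p_k\approx 1/K$, so $S_\infty=\frac1K\sum_k v_t^\top\mu_k$ and $D_\infty^2=1-\frac1{K^2}\sum_{k,l}\mu_k^\top\mu_l$. This is where $d=\Omega(\log(K/\delta))$ enters. I would treat the prototypes, or the non-member embedding, as random on the sphere. Lévy concentration together with a union bound over the $K$ prototypes (and over the $K^2$ pairs) then gives $|v_t^\top\mu_k|,|\mu_k^\top\mu_l|\le O(\sqrt{\log(K/\delta)/d})$ with probability $1-\delta/2$. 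This yields $S_\infty(t_{\mathrm{out}})\approx 0$ and $D_\infty^2(t_{\mathrm{out}})\approx 1-1/K$, up to an error that is a small fraction of $\Gamma$, say $\Gamma/8$. Consequently $\Gamma>0$ is meaningful.

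The second step is finite-sample concentration. $S_n(t)=\frac1n\sum_i v_t^\top v^{(i)}$ is an average of i.i.d.\ terms in $[-1,1]$, so Hoeffding gives $|S_n-S_\infty|\le\Gamma/4$ with probability $1-\delta/8$ once $n\ge C\Gamma^{-2}\log(1/\delta)$. For the dispersion I would use the identity
\[
D_n^2=\frac1n\sum_i\|v^{(i)}\|^2-\|\bar v_n\|^2 .
\]
The first term is a bounded average, handled by Hoeffding. The second is controlled by vector concentration of $\bar v_n$ around $\mathbb{E}v^{(1)}$, for example via a bounded-differences (McDiarmid) bound on $\|\bar v_n-\mathbb{E}v\|$ with differences of order $2/n$. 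Since $\bigl|\|\bar v_n\|^2-\|\mathbb{E}v\|^2\bigr|\le 2\|\bar v_n-\mathbb{E}v\|$, this gives $|D_n^2-D_\infty^2|\le\Gamma/4$ at the same sample size, absorbing the $O(1/n)$ bias term. A union bound over the two texts and two statistics completes the probabilistic part.

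The final step sets the thresholds at the midpoints:
\[
s_{\mathrm{thr}}=\tfrac12\bigl(S_\infty(t_{\mathrm{in}})+S_\infty(t_{\mathrm{out}})\bigr),\qquad d_{\mathrm{thr}}^2=\tfrac12\bigl(D_\infty^2(t_{\mathrm{in}})+D_\infty^2(t_{\mathrm{out}})\bigr).
\]
Each empirical statistic lies within $\Gamma/4<\Delta/2$ of its population value, so the four inequalities follow.

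The main obstacle is the first step. The informal "geometric properties" must be turned into precise assumptions: a random or near-orthogonal prototype model, a bound on $\xi_i$, and the approximations $p_k\approx 1/K$ and $p_{y^\star}\approx 1$. These have to be strong enough that the dimension-dependent error is provably below a constant fraction of $\Gamma$. I would first state these as an explicit assumption with tolerance $\epsilon\le\Gamma/8$, and then show $d=\Omega(\log(K/\delta))$ suffices to meet it.
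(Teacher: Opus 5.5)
Your proposal is correct and follows essentially the same route as the paper. Both arguments use population gaps from near-orthogonal prototypes and non-member isotropy, with $d=\Omega(\log(K/\delta))$ entering through a union bound, then Hoeffding for $S_n$, the identity $D_n^2=\frac{1}{n}\sum_i\|v^{(i)}\|^2-\|\bar v_n\|^2$ with vector concentration for the dispersion, a union bound over failure events, and midpoint thresholds. The differences are cosmetic: you use McDiarmid where the paper invokes a vector Hoeffding bound (this adds a harmless $O(1/\sqrt{n})$ offset, not $O(1/n)$), and you fold the optimization residual into $D_\infty^2$ rather than carrying a separate $C\varepsilon_{\mathrm{opt}}$ term.
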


Theorem \ref{thm:main} formally validates that our randomized reverse-optimization operationalizes the geometric distinction described above. It guarantees that the empirical statistics $(S_n, D_n^2)$ inherit the population-level separation with high probability. Furthermore, it quantifies the sample complexity, highlighting that the computational cost scales inversely with the square of the separation margin $\Gamma$. Detailed proofs are deferred to \ref{sec:appendix_proof}.

\subsubsection{Membership Inference via Anomaly Detection}\label{sec:anomaly detection}
Translating the theoretical separation into a practical algorithm requires estimating the decision boundaries. As the theoretical thresholds are unknown, we reformulate the problem as unsupervised anomaly detection.

\paragraph{Semantic-null baseline} To approximate the non-member distribution, we generate a reference set of $\ell$ \textit{gibberish strings} $\mathcal{G} = \{g_1, \ldots, g_{\ell}\}$. Since $\mathcal{G}$ contains no semantic information and did not appear in training, it satisfies the isotropy condition stated in Section \ref{sec:theory} and serves as a typical proxy for the null hypothesis.

\paragraph{Inference procedure} We extract features $\mathcal{F}_{\mathcal{G}}$ from $\mathcal{G}$ to train an ensemble of $k_{\text{det}}$ anomaly detectors (e.g., Isolation Forests). Each detector learns a boundary encompassing the non-member region. For a test identity $t$, UMID classifies it as a member if the majority of detectors flag its features $f(t)$ as an anomaly, indicating a significant shift toward the high-similarity and low-variability characteristic of members.

\paragraph{Algorithm of UMID} The complete algorithm, including feature extraction, geometric separation, and membership inference, is stated in Algorithm \ref{alg:umid}.

\subsubsection{Optional Enhancement with Local Modality Samples}
In scenarios where the auditor possesses local real modality samples (which \textbf{cannot} be submitted to $\mathcal{M}$), UMID can include an auxiliary coherence feature. We use an external feature extractor $F$ (e.g., DeepFace~\cite{taigman2014deepface}) to compute $R(t)$: the average pairwise $\ell_2$ distance between the local samples and the optimized embeddings provided by UMID. Intuitively, members yield smaller $R(t)$ as optimized samples converge to true identity features. We incorporate $R(t)$ by performing K-means clustering ($K=2$) on the augmented feature set $\{(S_n, D_n^2, R)\}$, adding the cluster assignment as an additional vote to the ensemble.

\begin{algorithm}[t]
	\caption{The algorithm of UMID.}
	\label{alg:umid}
	\textbf{Input}: Model $\mathcal{M}$; Target text $t$; number of runs $n$; iterations per run $m$; learning rate $\eta$. \\
	\textbf{Output}: Membership decision (member/non-member).
	\begin{algorithmic}[1]
		\State \textbf{Phase 1: Baseline construction (Offline)}
		\State Generate gibberish set $\mathcal{G} = \{g_1, \dots, g_\ell\}$
		\For{$g_k \in \mathcal{G}$}
		\State $S_n(g_k), D_n^2(g_k) \gets \textsc{LatentInversion}(\mathcal{M}, g_k, n, m, \eta)$
		\EndFor
		\State Train ensemble detectors $\mathcal{E} = \{D_1, \dots, D_k\}$ on features $\{(S_n(g_k), D_n^2(g_k))\}$
		
		\State \textbf{Phase 2: Membership inference (Online)}
		\State $S_n(t), D_n^2(t) \gets \textsc{LatentInversion}(\mathcal{M}, t, n, m, \eta)$
		\State Votes $\gets \sum_{j=1}^k \mathbb{I}(D_j(S_n(t), D_n^2(t)) = \text{Anomaly})$
		\State \textbf{Return} Member if Votes $> N$, else Non-member.
		
		\Statex \hrulefill
		\Function{LatentInversion}{$\mathcal{M}, t, n, m, \eta$}
		\State $v_t \gets \phi_{\text{text}}(t)$; \quad $\mathcal{V} \gets \emptyset$  \Comment{Set of optimized embeddings}
		\For{$i = 1$ \textbf{to} $n$}
		\State $x_0 \sim \mathcal{N}(0, I)$ \Comment{Random initialization}
		\For{$j = 0$ \textbf{to} $m-1$}
		\State $g \gets g_j \gets \nabla_{x_j} \left(\frac{v_t^\top \phi_{\text{mod}}(x_j)}{\|v_t\| \|\phi_{\text{mod}}(x_j)\|}\right)$
		\State $x_{j+1} \gets x_j + \eta \cdot g$
		\EndFor
		\State $\mathcal{V} \gets \mathcal{V} \cup \{\phi_{\text{mod}}(x_m)\}$
		\EndFor
		\State Compute $S_n, D_n^2$ via Eqn. \eqref{eq:bar-v-def} and Eqn. \eqref{eq:Dn-def}
		\State \textbf{Return} $S_n, D_n^2$
		\EndFunction
	\end{algorithmic}
\end{algorithm}

%% file: 4_evaluation_revised.tex
\section{Experiments}
\label{sec:evalu}

We present a comprehensive evaluation of UMID across two distinct multimodal contrastive frameworks: CLIP (vision-text) and CLAP (audio-text). The experiments aim to validate UMID's superiority in both detection effectiveness and computational efficiency compared to state-of-the-art baselines.

\subsection{Experimental Setup}

\subsubsection{Datasets}
\paragraph{Image-text (CLIP)}

We construct the \textbf{CelebA} dataset following previous work \cite{hintersdorf2024does}. We integrate image-text pairs from \textbf{FaceScrub} \cite{kemelmacher2016megaface} and \textbf{LAION-5B} \cite{schuhmann2022laion} into the \textbf{CC3M} \cite{changpinyo2021conceptual} backbone to form the training set, ensuring a gender-balanced distribution across 200 selected identities. Specifically, we set a threshold to exclude individuals with excessively high frequencies in LAION-400M, maintaining 100 members and 100 non-members to ensure experimental reliability. The training set comprises 100 celebrities (members) and 100 hold-out identities (non-members). To simulate varying degrees of memorization, we curate two subsets: \textit{One-Shot} (1 photo/person) and \textit{Many-Shot} (75 photos/person).

\paragraph{Audio-text (CLAP)}
We utilize \textbf{LibriSpeech}~\cite{7178964} and construct a richer speaker recognition dataset based on \textbf{CommonVoice18.0}~\cite{DBLP:journals/corr/abs-1912-06670}. The latter involves 3,000 speakers (1,500 members/1,500 non-members) with detailed PII descriptions (ID, age, gender) augmented by GPT-4o to generate semantic background narratives. Similarly, we evaluate under \textbf{One-Shot} (1 audio clip/user) and \textbf{Many-Shot} (50 audio clips/user) settings.


\subsubsection{Models and Baselines}
\paragraph{Models} For CLIP, we conduct membership inference on six target models spanning three visual backbones: ResNet-50, ResNet-50x4, and ViT-B/32. The ResNet-based models follow the standard ResNet architecture~\cite{he2016deep,theckedath2020detecting}, while ViT-B/32 adopts the Vision Transformer design, covering both CNN- and Transformer-based encoders. For CLAP, we adopt a dual-encoder architecture in which the audio encoder is HTSAT~\cite{chen2022hts}, a transformer composed of four groups of Swin Transformer blocks~\cite{liu2021swin}, and the text encoder is RoBERTa~\cite{liu1907roberta}. The penultimate-layer outputs of both encoders (768 dimensions) are projected to a shared 512-dimensional embedding space via a two-layer MLP with ReLU activation. All target models are trained from scratch on their respective member datasets to establish unambiguous ground-truth membership labels. When real facial images are available for optional enhancement, DeepFace~\cite{serengil2020lightface} is used as an external feature extractor.
Following standard MIA protocols \cite{carlini2022membership}, we train models from scratch to ensure absolute ground-truth certainty. Auditing public foundation models (e.g., LAION-trained CLIP) is methodologically unreliable due to massive data contamination and unindexed training sets, which obscure membership labels. Our controlled training on industry-standard architectures (ResNet, ViT, RoBERTa) ensures that evaluation metrics reflect true algorithmic capability rather than dataset noise.

\paragraph{Baselines} We compare UMID against a spectrum of MIA methods:
\begin{itemize}
	\item \textbf{Shadow-model-based:} Standard MIA~\cite{shokri2017membership} is the base membership inference attack using shadow models; Audio Auditor~\cite{miao2021audio} trains shadow models and extracts audio features for inference; and AuditMI~\cite{teixeira2025exploring} trains shadow model using input utterances and features from model outputs; SLMIA-SR~\cite{SLMIA-SR} employs a shadow speaker recognition system to train attack model. These require training proxy models to mimic target behavior.
	\item \textbf{Metric-based:} IDIA~\cite{hintersdorf2024does} detects training membership by verifying if a model correctly predicts a target's identity from candidate prompts across multiple reference images; WSA~\cite{ko2023} infers membership via image-text cosine similarity, enhanced by a weakly supervised MIA framework trained on post-release non-member data; C-WSA~\cite{samira2025variance} treats low-variance samples as pseudo-members to train a robust image-feature classifier via confidence-based weak supervision. These rely on thresholding signals like loss or cosine similarity, often requiring real modality inputs (e.g., images/audio).
\end{itemize}

\begin{table}[t]
	\centering
	\caption{Comparison of CLIP (image-text) methods. Best and second-best results are \textbf{bolded}.}
	\label{table:clipbaseline}
	\renewcommand{\arraystretch}{0.95}
	\resizebox{\linewidth}{!}{
		\begin{tabular}{@{}ccc|ccc|c@{}}
			\toprule
			\multirow{2}[2]{*}{\textbf{Setting}} & \multirow{2}[2]{*}{\textbf{Method}} & \multirow{2}[2]{*}{\textbf{Real data}} & \multicolumn{3}{c|}{\textbf{Effectiveness}} & \textbf{Efficiency} \\
			\cmidrule{4-7} 
			& & & \textbf{Precision} & \textbf{Recall} & \textbf{Accuracy} & \textbf{Time} \\
			\midrule
			\multirow{6}[4]{*}{\shortstack{ResNet-50\\(one-shot)}} 
			& MIA~\cite{shokri2017membership} & 1 img & .7251 & .6873 & .7042 & 4.32h \\
			& IDIA~\cite{hintersdorf2024does} & 3 imgs & .6922 & .4031 & .6836 & 0.845s \\
			& WSA~\cite{ko2023} & 1 img & .6653 & .2925 & .6675 & 2158.1s \\
			& C-WSA~\cite{samira2025variance} & 1 img & .7210 & .4530 & .7125 & 2141.4s \\
			& \textbf{UMID (Ours)} & Null & \textbf{.8634} & \textbf{.9821} & \textbf{.9172} & \textbf{0.628s} \\
			& \textbf{UMID (Ours)} & 1 img & \textbf{.9145} & \textbf{.9912} & \textbf{.9528} & \textbf{0.745s} \\
			\midrule
			\multirow{6}[4]{*}{\shortstack{ResNet-50\\(many-shot)}} 
			& MIA~\cite{shokri2017membership} & 1 img & .7530 & .7086 & .7219 & 4.32h \\
			& IDIA~\cite{hintersdorf2024does} & 3 imgs & .6901 & .3998 & .6907 & 0.851s \\
			& WSA~\cite{ko2023} & 1 img & .6625 & .2867 & .6710 & 2143.7s \\
			& C-WSA~\cite{samira2025variance} & 1 img & .7245 & .4605 & .7218 & 2149.5s \\
			& \textbf{UMID (Ours)} & Null & \textbf{.8642} & \textbf{.9835} & \textbf{.9031} & \textbf{0.634s} \\
			& \textbf{UMID (Ours)} & 1 img & \textbf{.9168} & \textbf{.9925} & \textbf{.9554} & \textbf{0.752s} \\
			\midrule
			\multirow{6}[4]{*}{\shortstack{ResNet-50x4\\(one-shot)}} 
			& MIA~\cite{shokri2017membership} & 1 img & .7937 & .6492 & .7218 & 4.37h \\
			& IDIA~\cite{hintersdorf2024does} & 3 imgs & .6625 & .3980 & .6957 & 0.851s \\
			& WSA~\cite{ko2023} & 1 img & .6712 & .2912 & .6808 & 2144.3s \\
			& C-WSA~\cite{samira2025variance} & 1 img & .7305 & .4592 & .7234 & 2148.6s \\
			& \textbf{UMID (Ours)} & Null & \textbf{.8613} & \textbf{.9747} & \textbf{.9355} & \textbf{0.633s} \\
			& \textbf{UMID (Ours)} & 1 img & \textbf{.9128} & \textbf{.9896} & \textbf{.9535} & \textbf{0.685s} \\
			\midrule
			\multirow{6}[4]{*}{\shortstack{ResNet-50x4\\(many-shot)}} 
			& MIA~\cite{shokri2017membership} & 1 img & .7654 & .6981 & .7189 & 4.36h \\
			& IDIA~\cite{hintersdorf2024does} & 3 imgs & .7085 & .3904 & .7167 & 0.844s \\
			& WSA~\cite{ko2023} & 1 img & .6724 & .2935 & .6685 & 2156.7s \\
			& C-WSA~\cite{samira2025variance} & 1 img & .7364 & .4670 & .7321 & 2150.3s \\
			& \textbf{UMID (Ours)} & Null & \textbf{.8712} & \textbf{.9916} & \textbf{.9462} & \textbf{0.637s} \\
			& \textbf{UMID (Ours)} & 1 img & \textbf{.9156} & \textbf{.9908} & \textbf{.9562} & \textbf{0.691s} \\
			\midrule
			\multirow{6}[4]{*}{\shortstack{ViT-B/32\\(one-shot)}} 
			& MIA~\cite{shokri2017membership} & 1 img & .6923 & .5987 & .6694 & 4.35h \\
			& IDIA~\cite{hintersdorf2024does} & 3 imgs & .6783 & .3746 & .6772 & 1.245s \\
			& WSA~\cite{ko2023} & 1 img & .6323 & .2964 & .6812 & 2143.1s \\
			& C-WSA~\cite{samira2025variance} & 1 img & .6945 & .4230 & .6912 & 2146.2s \\
			& \textbf{UMID (Ours)} & Null & \textbf{.7091} & \textbf{.6385} & \textbf{.6837} & \textbf{0.667s} \\
			& \textbf{UMID (Ours)} & 1 img & \textbf{.7462} & \textbf{.6815} & \textbf{.7145} & \textbf{0.701s} \\
			\midrule
			\multirow{6}[4]{*}{\shortstack{ViT-B/32\\(many-shot)}} 
			& MIA~\cite{shokri2017membership} & 1 img & .6958 & .6341 & .6732 & 4.35h \\
			& IDIA~\cite{hintersdorf2024does} & 3 imgs & .6890 & .3811 & .6927 & 1.251s \\
			& WSA~\cite{ko2023} & 1 img & .7045 & .2806 & .6895 & 2144.8s \\
			& C-WSA~\cite{samira2025variance} & 1 img & .7012 & .4305 & .7008 & 2147.9s \\
			& \textbf{UMID (Ours)} & Null & \textbf{.7182} & \textbf{.6372} & \textbf{.6947} & \textbf{0.652s} \\
			& \textbf{UMID (Ours)} & 1 img & \textbf{.7498} & \textbf{.6844} & \textbf{.7186} & \textbf{0.712s} \\
			\bottomrule
		\end{tabular}
	}
\end{table}

\begin{table}[t]
	\centering
	\caption{Comparison of CLAP (audio-text) methods. Best and second-best results are \textbf{bolded}.}
	\label{table:clap_comprehensive}
	\renewcommand{\arraystretch}{0.95}
	\resizebox{\linewidth}{!}{
		\begin{tabular}{@{}ccc|ccc|c@{}}
			\toprule
			\multirow{2}[2]{*}{\textbf{Setting}} & \multirow{2}[2]{*}{\textbf{Method}} & \multirow{2}[2]{*}{\textbf{Real data}} & \multicolumn{3}{c|}{\textbf{Effectiveness}} & \textbf{Efficiency} \\
			\cmidrule{4-7} 
			& & & \textbf{Precision} & \textbf{Recall} & \textbf{Accuracy} & \textbf{Time} \\
			\midrule
			\multirow{5}[4]{*}{\shortstack{LibriSpeech\\(one-shot)}} 
			& Audio Auditor~\cite{miao2021audio} & 1 audio & .6338 & .7324 & .6519 & 1.273s \\
			& SLMIA-SR~\cite{SLMIA-SR} & 1 audio & .7521 & .8864 & .8342 & 1.528s \\
			& AuditMI~\cite{teixeira2025exploring} & 1 audio & .8257 & .9526 & .8791 & 13.578s \\
			& \textbf{UMID (Ours)} & Null & \textbf{.8649} & \textbf{.9649} & \textbf{.9127} & \textbf{0.603s} \\
			& \textbf{UMID (Ours)} & 1 audio & \textbf{.8921} & \textbf{.9868} & \textbf{.9354} & \textbf{0.647s} \\
			\midrule
			\multirow{5}[4]{*}{\shortstack{LibriSpeech\\(many-shot)}} 
			& Audio Auditor~\cite{miao2021audio} & 1 audio & .6559 & .8013 & .6659 & 1.245s \\
			& SLMIA-SR~\cite{SLMIA-SR} & 1 audio & .7619 & .9007 & .8433 & 1.547s \\
			& AuditMI~\cite{teixeira2025exploring} & 1 audio & .8341 & .9804 & .8816 & 13.511s \\
			& \textbf{UMID (Ours)} & Null & \textbf{.8812} & \textbf{.9876} & \textbf{.9307} & \textbf{0.612s} \\
			& \textbf{UMID (Ours)} & 1 audio & \textbf{91.63} & \textbf{99.57} & \textbf{95.24} & \textbf{0.659s} \\
			\midrule
			\multirow{5}[4]{*}{\shortstack{CommonVoice\\(one-shot)}} 
			& Audio Auditor~\cite{miao2021audio} & 1 audio & .5485 & .6822 & .6052 & 1.347s \\
			& SLMIA-SR~\cite{SLMIA-SR} & 1 audio & .6539 & .7691 & .7048 & 1.597s \\
			& AuditMI~\cite{teixeira2025exploring} & 1 audio & .7143 & .8145 & .7436 & 15.214s \\
			& \textbf{UMID (Ours)} & Null & \textbf{.7496} & \textbf{.8601} & \textbf{.8179} & \textbf{0.721s} \\
			& \textbf{UMID (Ours)} & 1 audio & \textbf{.7602} & \textbf{.8955} & \textbf{.8356} & \textbf{0.753s} \\
			\midrule
			\multirow{5}[4]{*}{\shortstack{CommonVoice\\(many-shot)}} 
			& Audio Auditor~\cite{miao2021audio} & 1 audio & .5611 & .7358 & .6135 & 1.391s \\
			& SLMIA-SR~\cite{SLMIA-SR} & 1 audio & .6628 & .7927 & .7218 & 1.652s \\
			& AuditMI~\cite{teixeira2025exploring} & 1 audio & .7352 & .8481 & .7564 & 15.931s \\
			& \textbf{UMID (Ours)} & Null & \textbf{.7647} & \textbf{.8946} & \textbf{.8233} & \textbf{0.815s} \\
			& \textbf{UMID (Ours)} & 1 audio & \textbf{.7934} & \textbf{.9113} & \textbf{.8569} & \textbf{0.776s} \\
			\bottomrule
		\end{tabular}
	}
\end{table}

\subsubsection{Implementation and Evaluation}
All experiments are conducted on four NVIDIA GeForce RTX 3090 GPUs. Each setting is repeated five times, with the mean and standard deviation reported. UMID performs optimization in latent inversion for $n=100$ epochs, each consisting of $m=1000$ gradient descent iterations with a learning rate of $\eta = 3 \times 10^{-2}$. The anomaly detection ensemble includes LocalOutlierFactor~\cite{cheng2019outlier}, IsolationForest~\cite{liu2008isolation}, OneClassSVM~\cite{li2003improving,khan2014one}, and AutoEncoder~\cite{chen2018autoencoder}. These detectors are trained using $\ell=100$ synthetic gibberish strings generated by prompting GPT-3.5-turbo (see~\ref{sec:gibberish} for examples), with a default voting threshold of $N=3$. When real non-textual samples are available, a $K$-means model ($K=2$) is additionally incorporated, and the threshold is adjusted to $N'=4$.


Additionally, we evaluate effectiveness using \textbf{Precision}, \textbf{Recall}, and \textbf{Accuracy}. Efficiency is assessed via \textbf{Time}, measured as the wall-clock latency per query.

\subsection{Comparison with State-of-the-Art Methods}
\paragraph{Performance on vision-language models}
Table \ref{table:clipbaseline} provides a comprehensive comparison of UMID against prior membership inference baselines across CLIP backbones, data exposure regimes, and query modalities. On ResNet-based CLIP models (ResNet-50 and ResNet-50x4), UMID exhibits a clear and consistent advantage, achieving over $90\%$ accuracy using text-only (without real data) queries while driving recall to $97$–$99\%$. This represents a substantial improvement over the strongest baseline C-WSA, with gains of up to $+18$ accuracy points and, more critically, over $+50$ recall point. This suggests that metric- and shadow-model-based attacks incur severe false negatives and fail to capture subtle memorization effects in contrastive models. Additionally, UMID still maintains a consistent superior effectiveness over all baselines, despite that Vision Transformers (ViT-B/32) exhibit weaker text-only signals due to stronger generalization. This indicates its ability to effectively probe the more resilient Transformer latent space. Importantly, the effectiveness gains across all settings come without increased computational cost: UMID completes inference in sub-second time ($\approx 0.6$–$0.8$s), delivering orders-of-magnitude speedups over shadow-model approaches while eliminating reliance on real biometric samples. 
To mirror real-world auditing where non-members predominate, we evaluate UMID on an imbalanced CelebA split (10\% members). UMID achieves a 0.941 AUROC with a low 3.2\% FPR (at 90\% TPR). This demonstrates that our semantic-null baseline effectively characterizes the non-member distribution, preventing excessive false positives in skewed real-world scenarios.

\paragraph{Performance on audio-language models} Table~\ref{table:clap_comprehensive} compares CLAP membership auditing methods across two datasets, revealing three consistent findings: (i) Effectiveness and robustness: UMID forms the clear top tier in all settings, uniquely maintaining high precision and high recall simultaneously. On LibriSpeech, UMID already performs strongly without any real target audio (UMID-Null: $91.27\%/93.07\%$ accuracy in one-/many-shot), exceeding the strongest baseline AuditMI by a clear margin. Notably, with only one audio query, the gap further widens (up to $+7.08$ accuracy points) and recall approaches saturation. This advantage is amplified on the more challenging CommonVoice dataset, where all baselines degrade markedly (e.g., AuditMI $\approx 75\%$ accuracy) but UMID remains robust, widening both accuracy and recall gaps, indicating better generalization beyond clean speech. (ii) Exposure sensitivity. UMID’s one-shot performance is already close to many-shot (accuracy gaps $\leq 2.2$ points), suggesting that membership leakage is detectable even under minimal exposure when cross-modal evidence is properly exploited. (iii) Efficiency and deployability. In addition to its strong effectiveness, UMID runs in sub-second time ($\approx 0.6$--$0.8$s), offering an order-of-magnitude speedup over other baselines.

\begin{figure*}[tp] 
    \centering
    \subfloat[]{\includegraphics[width=0.32\textwidth]{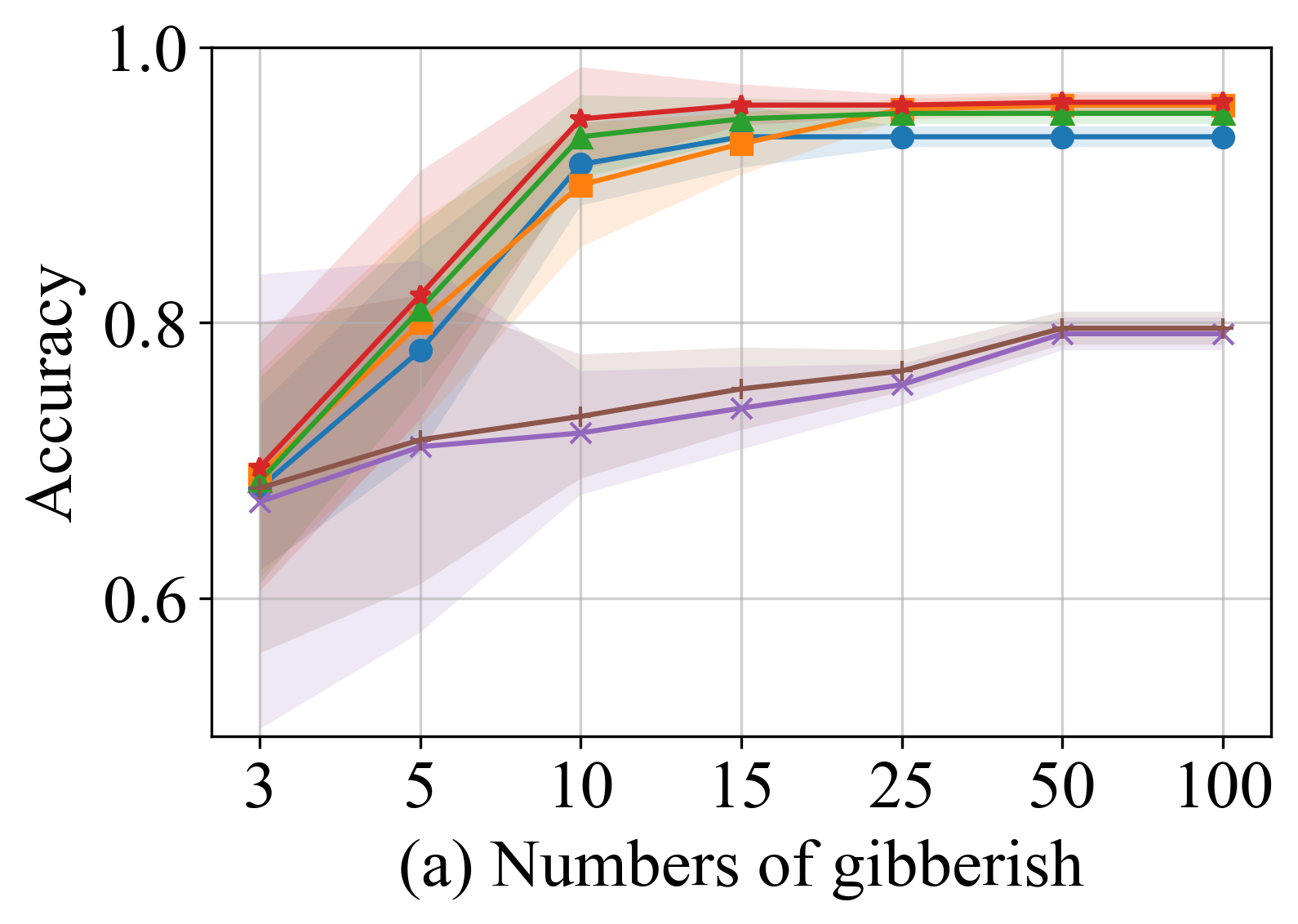}}
    \hfill
    \subfloat[]{\includegraphics[width=0.32\textwidth]{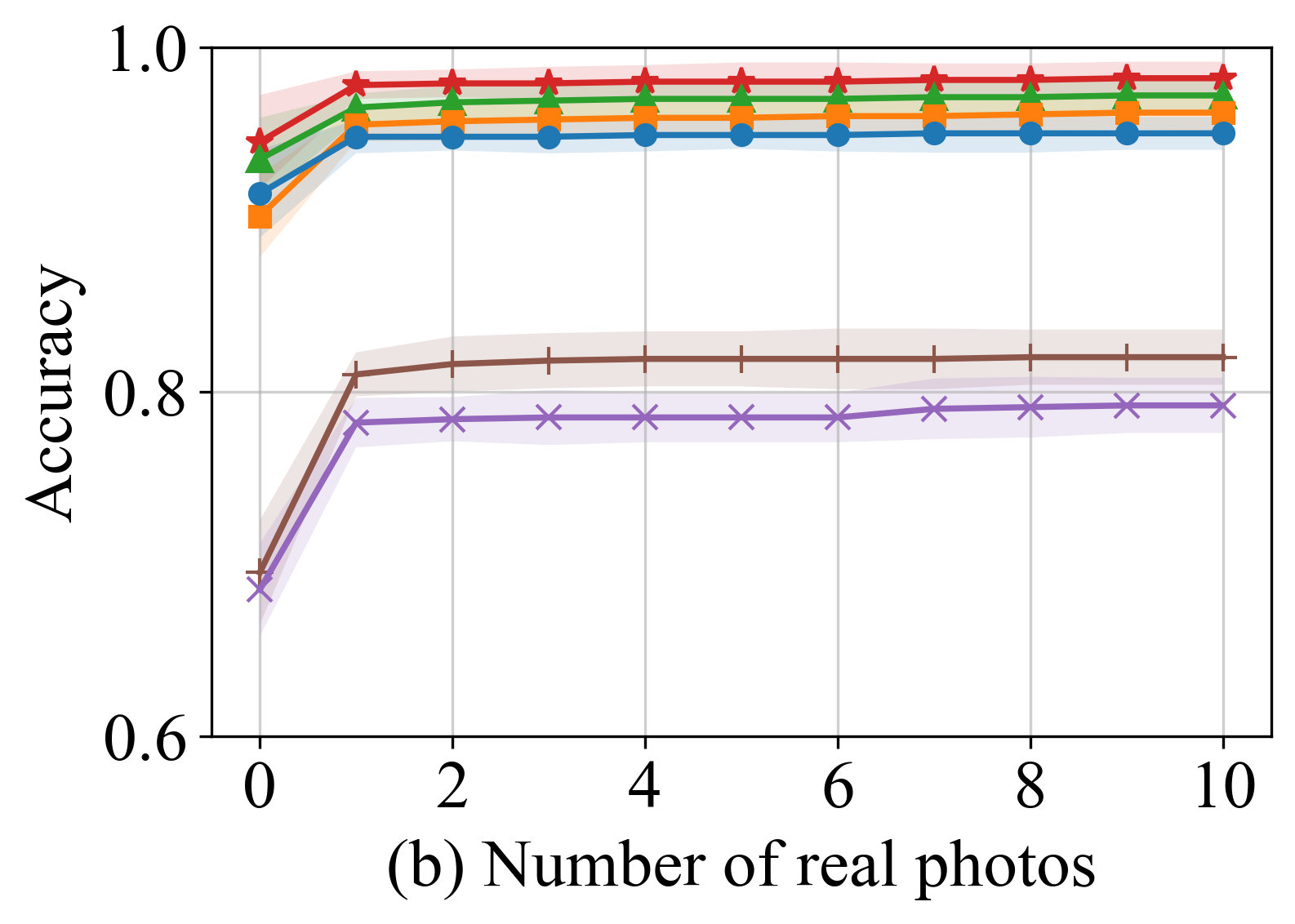}}
    \hfill
    \subfloat[]{\includegraphics[width=0.32\textwidth]{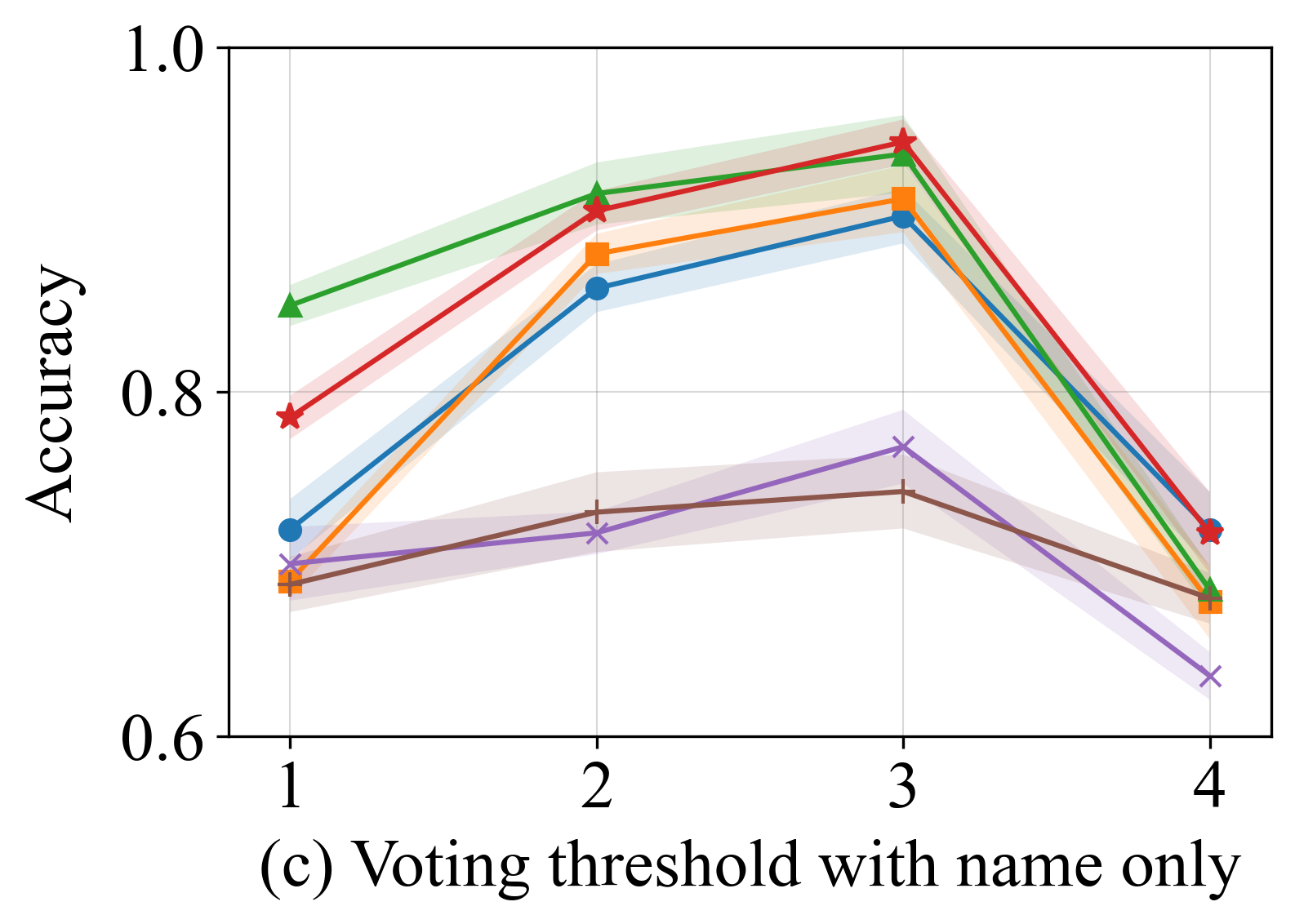}}
    \\[2mm]
    \subfloat[]{\includegraphics[width=0.32\textwidth]{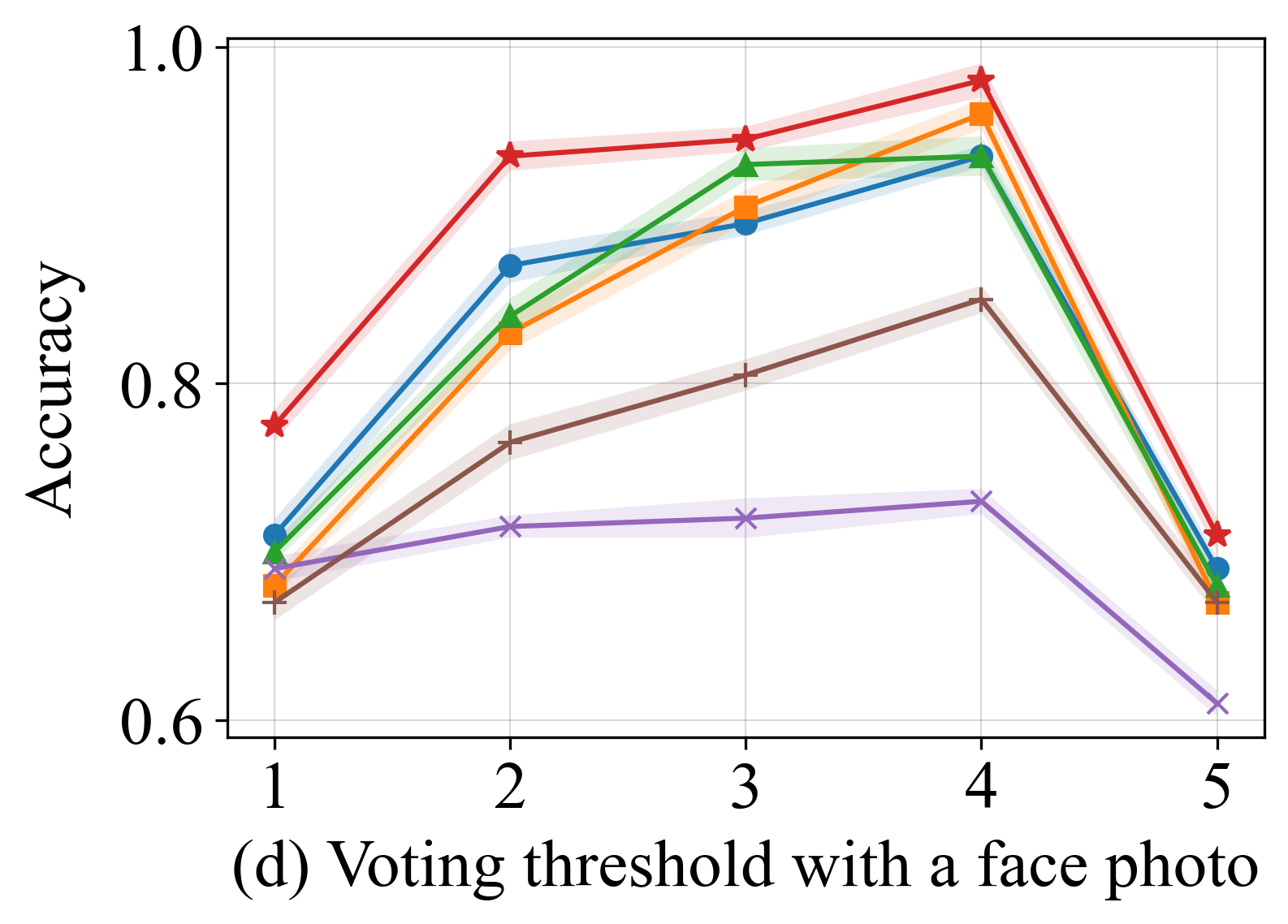}}
    \hfill
    \subfloat[]{\includegraphics[width=0.32\textwidth]{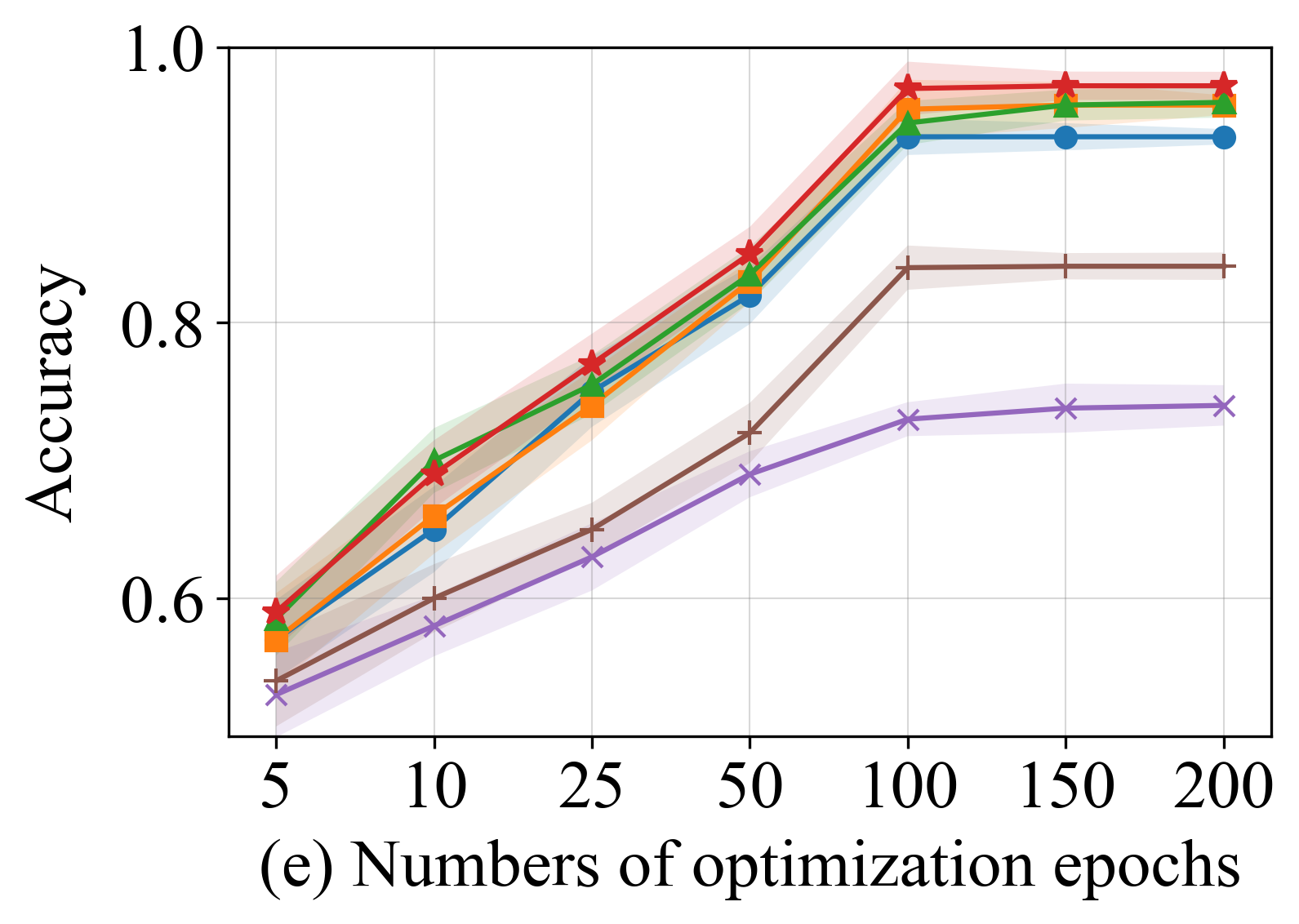}}
    \hfill
    \subfloat[]{\includegraphics[width=0.32\textwidth]{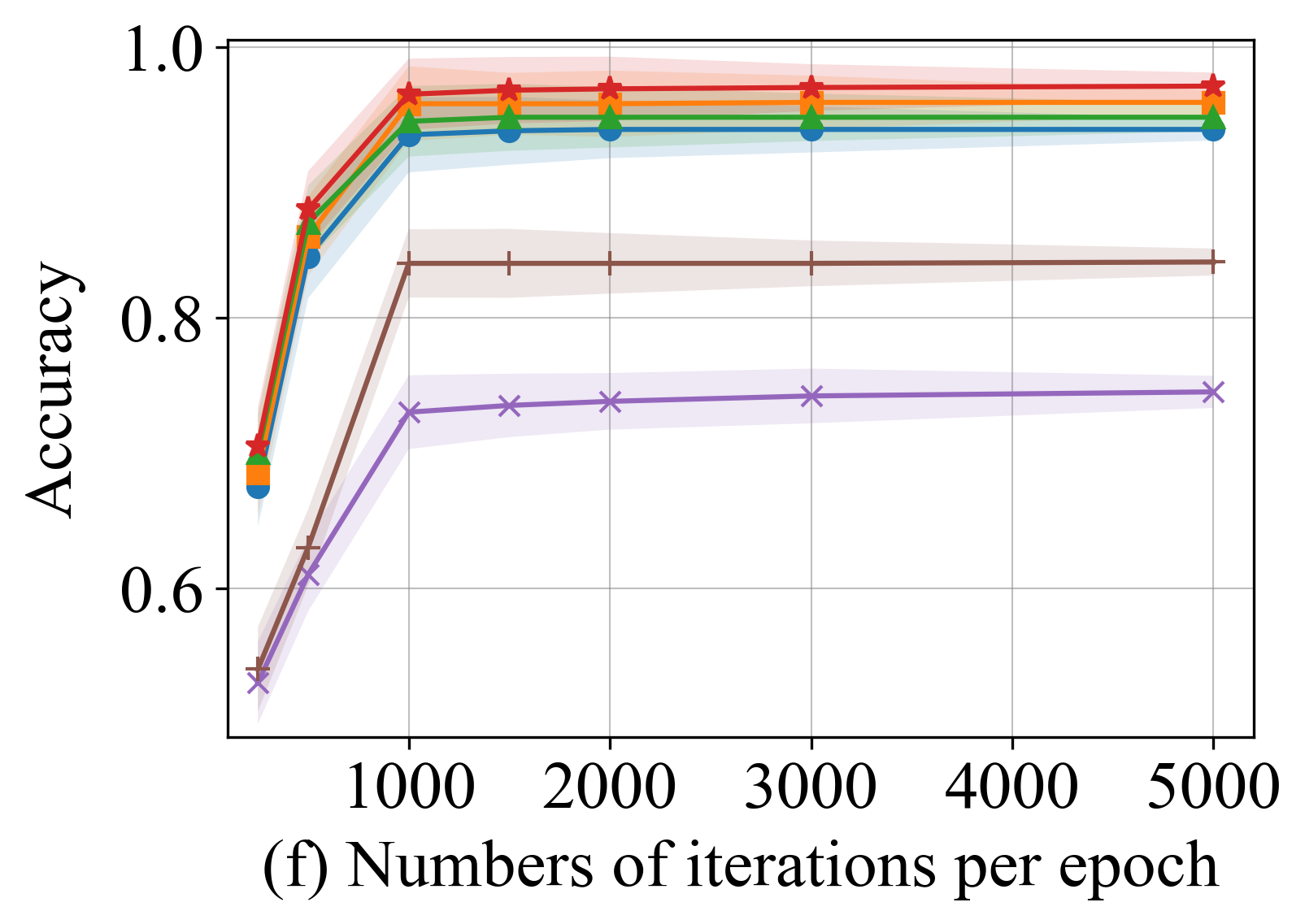}}
    
    \vspace{2mm}
    \includegraphics[width=0.5\textwidth]{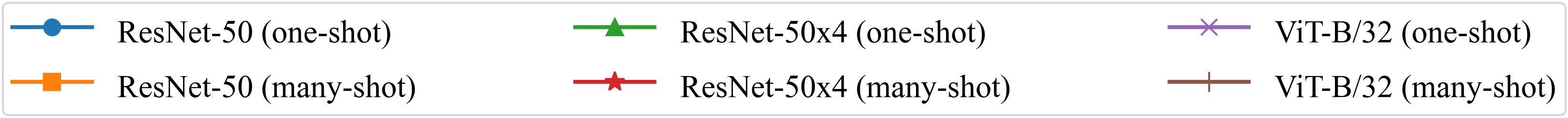}
    \caption{Detection accuracy for CLIP model (ResNet-50) under various parameters.}
    \label{fig:clipablation}
\end{figure*}

\begin{figure*}[tp] 
    \centering
    \subfloat[]{\includegraphics[width=0.32\textwidth]{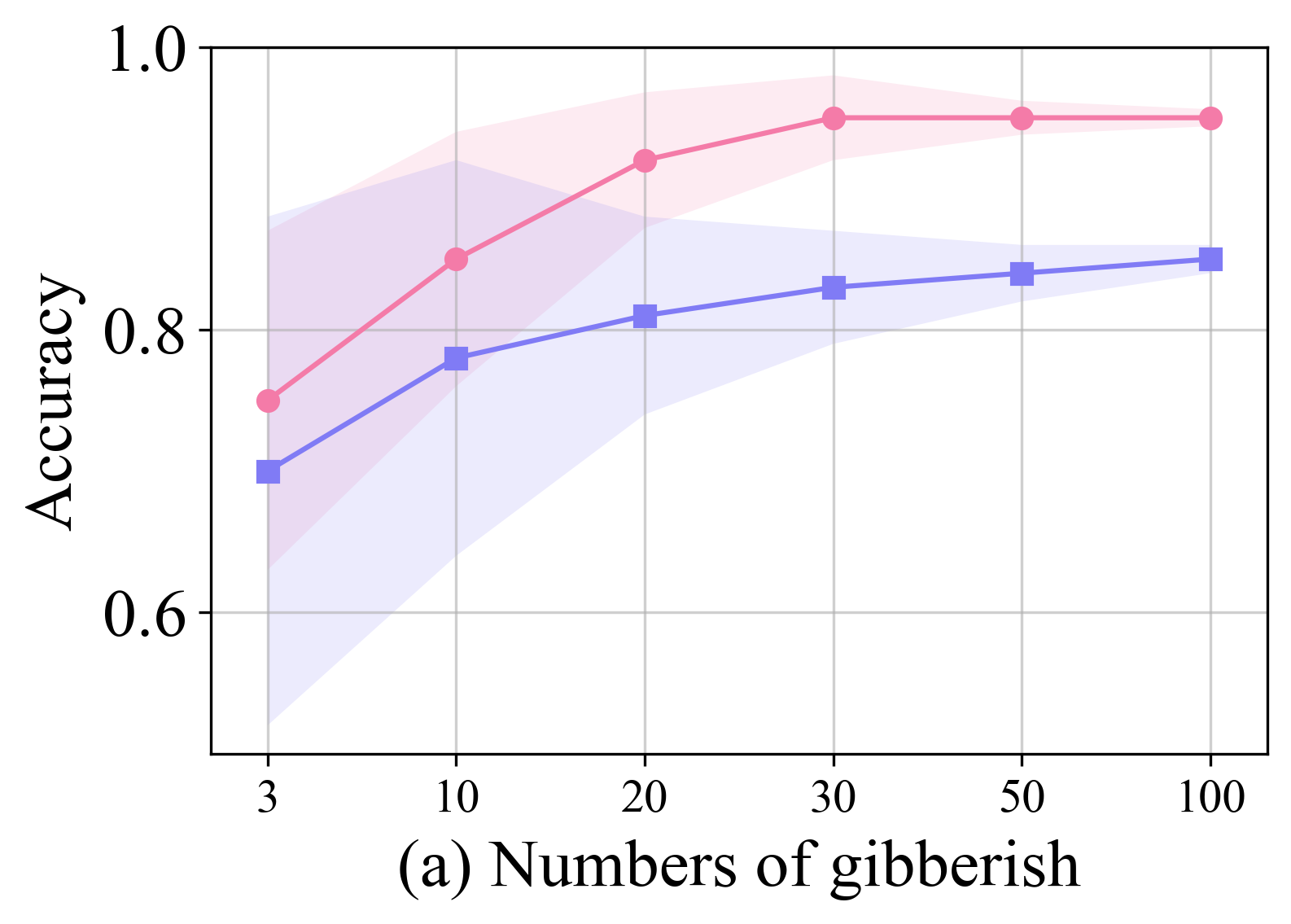}}
    \hfill
    \subfloat[]{\includegraphics[width=0.32\textwidth]{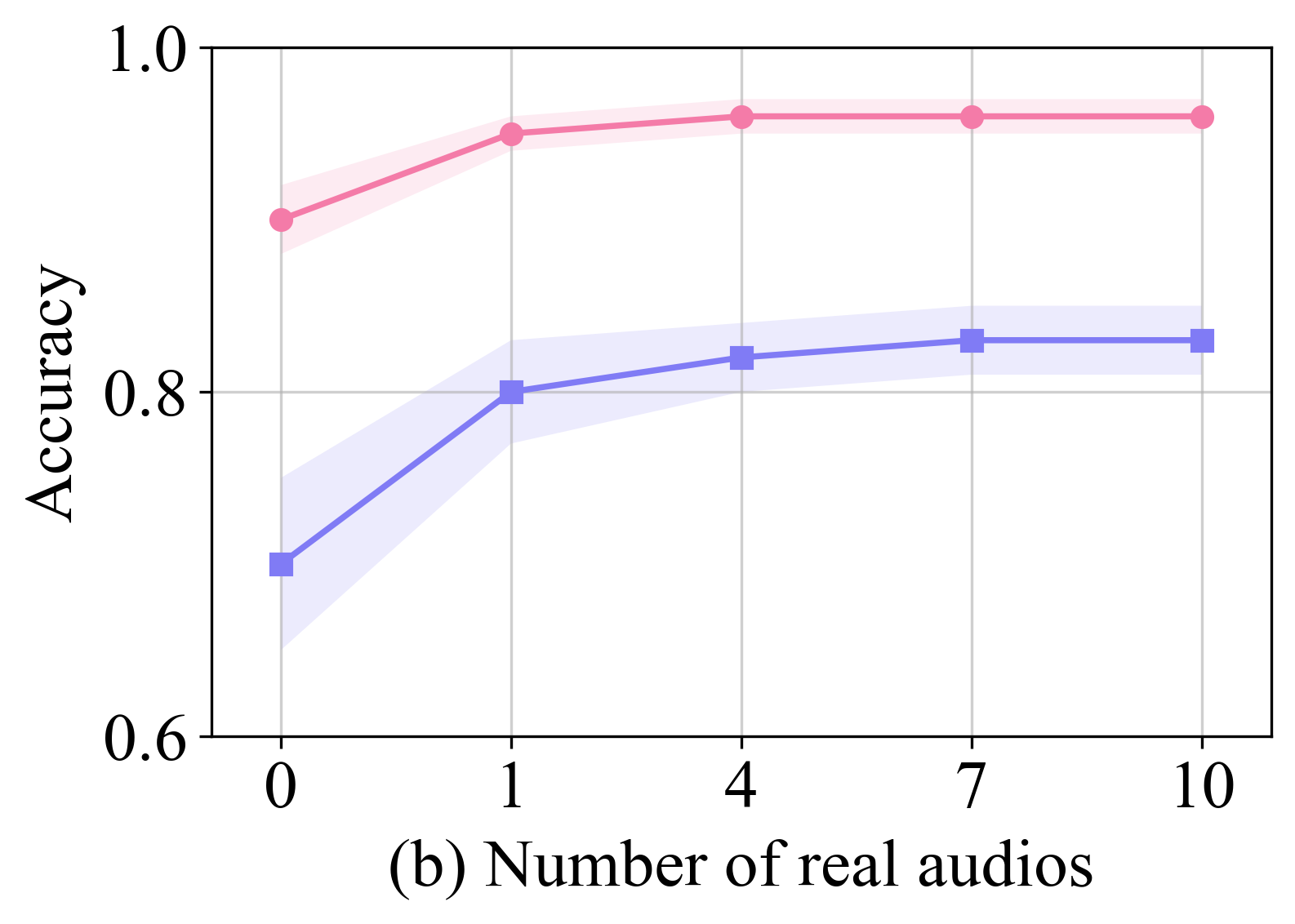}}
    \hfill
    \subfloat[]{\includegraphics[width=0.32\textwidth]{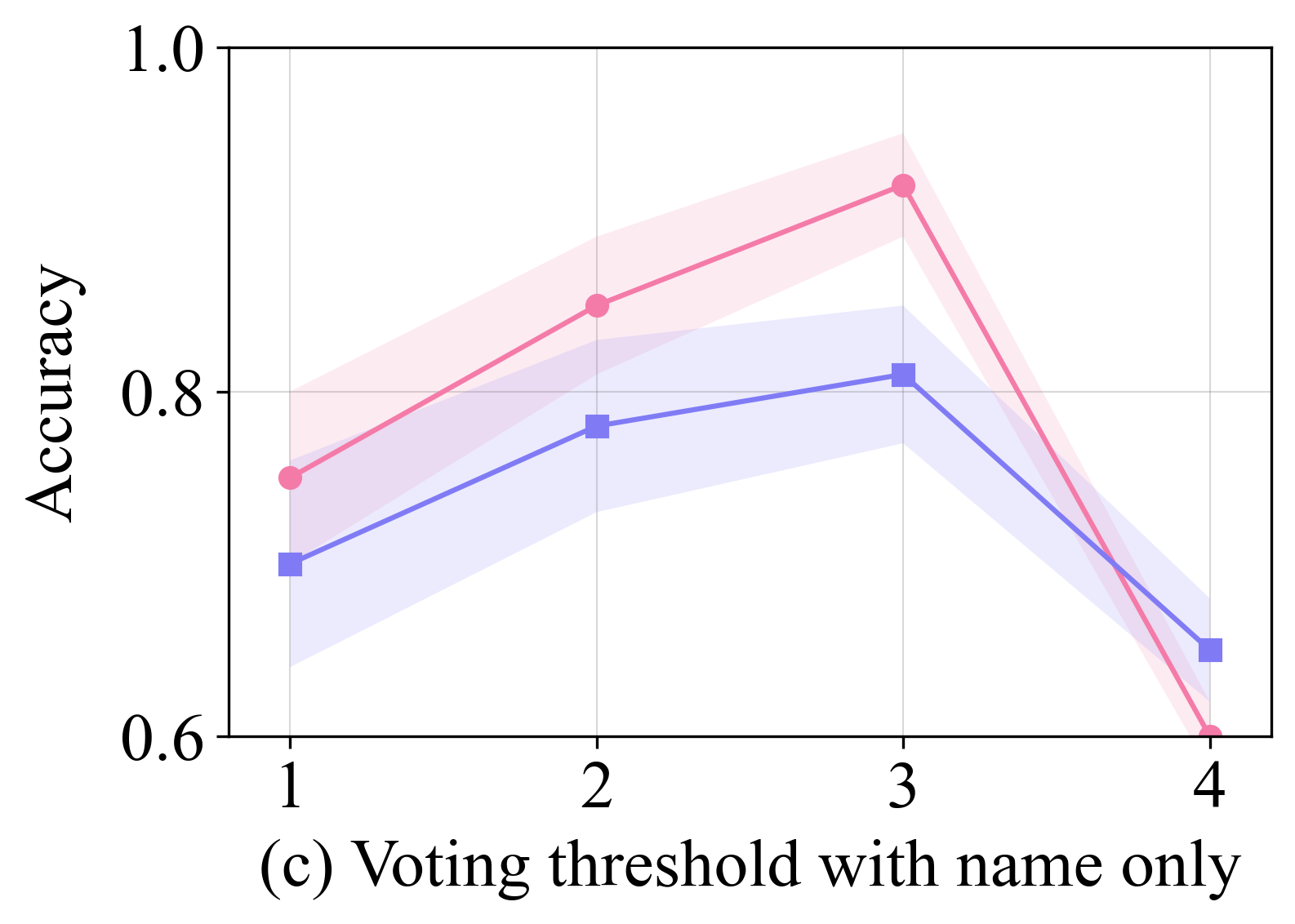}}
    \\[2mm]
    \subfloat[]{\includegraphics[width=0.32\textwidth]{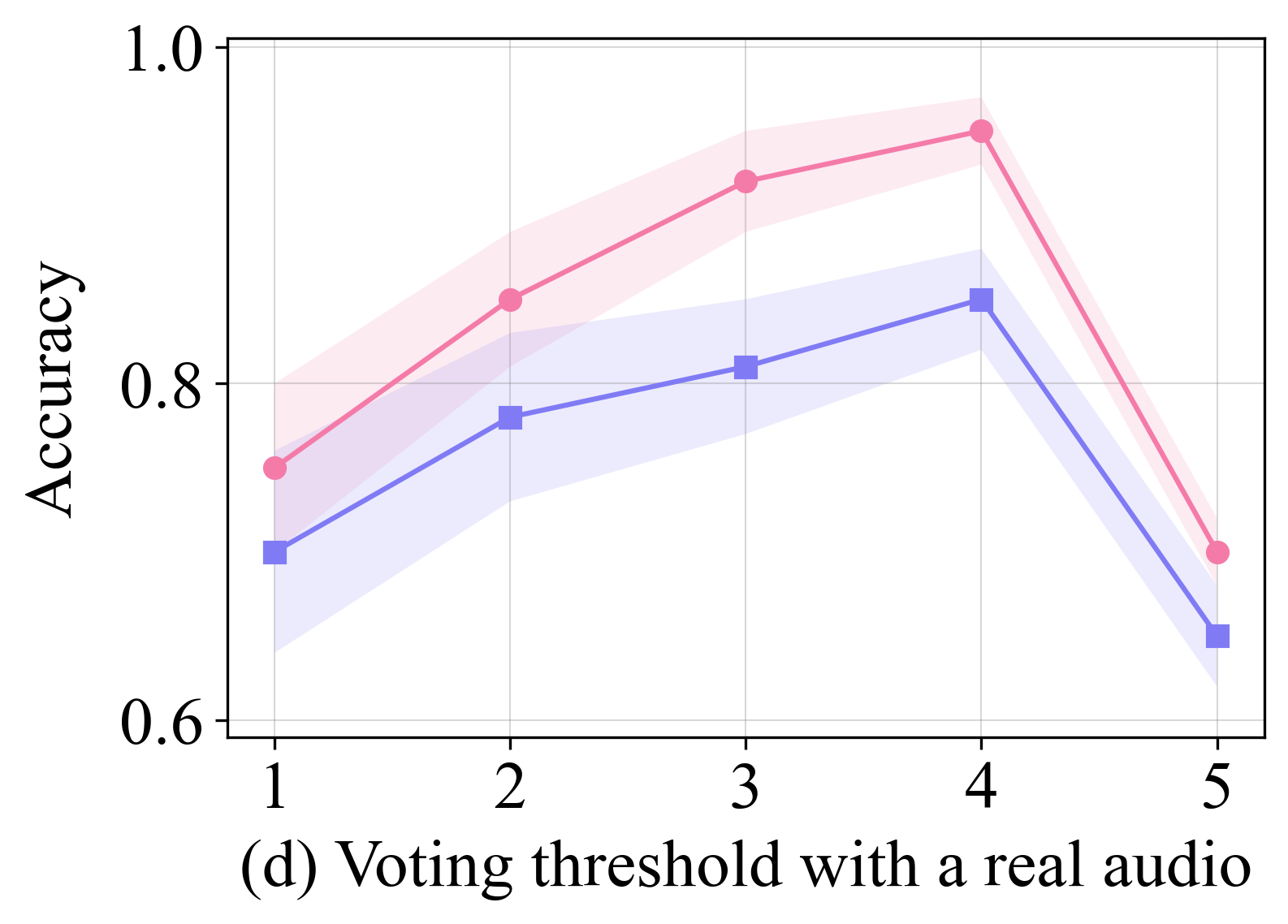}}
    \hfill
    \subfloat[]{\includegraphics[width=0.32\textwidth]{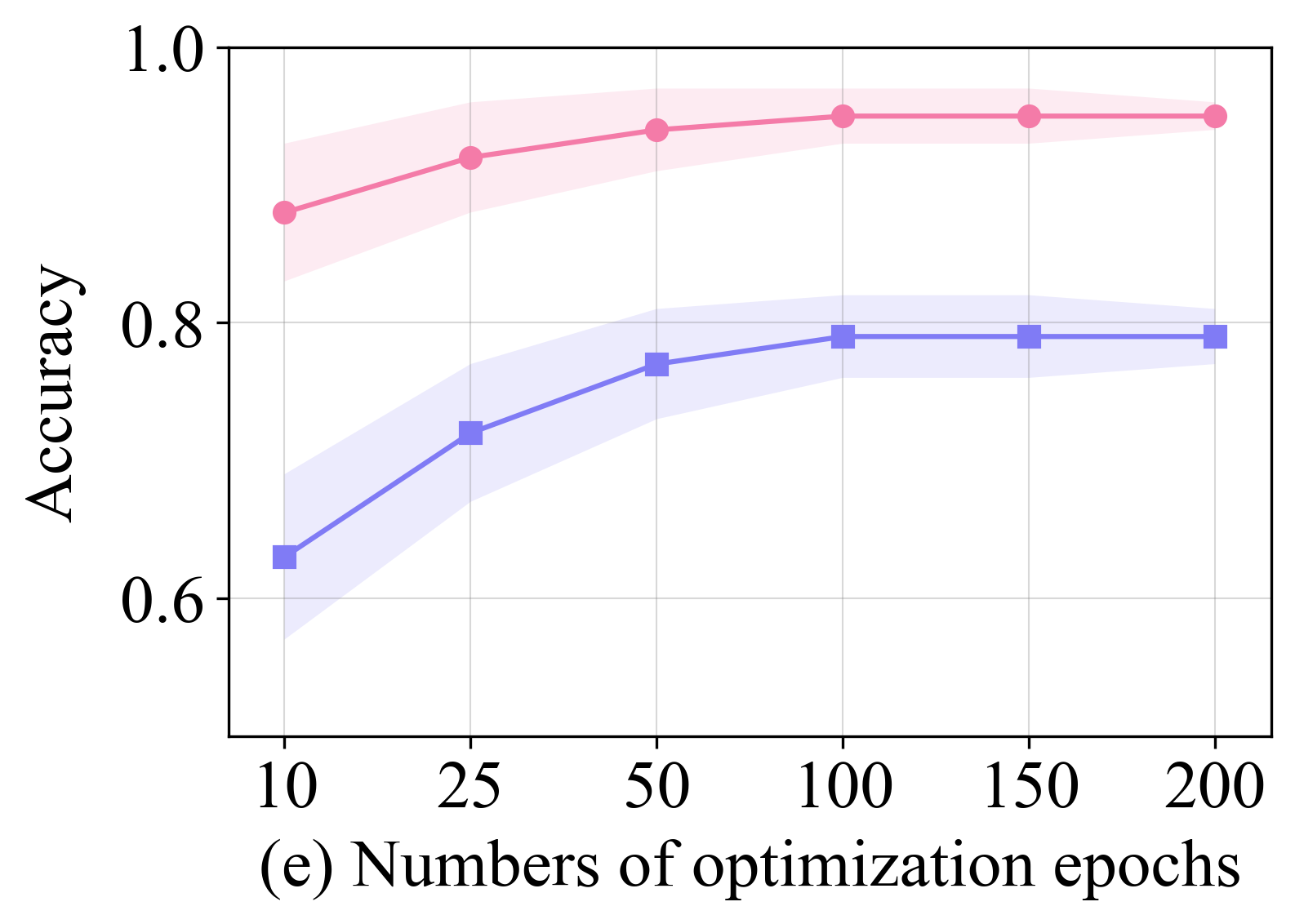}}
    \hfill
    \subfloat[]{\includegraphics[width=0.32\textwidth]{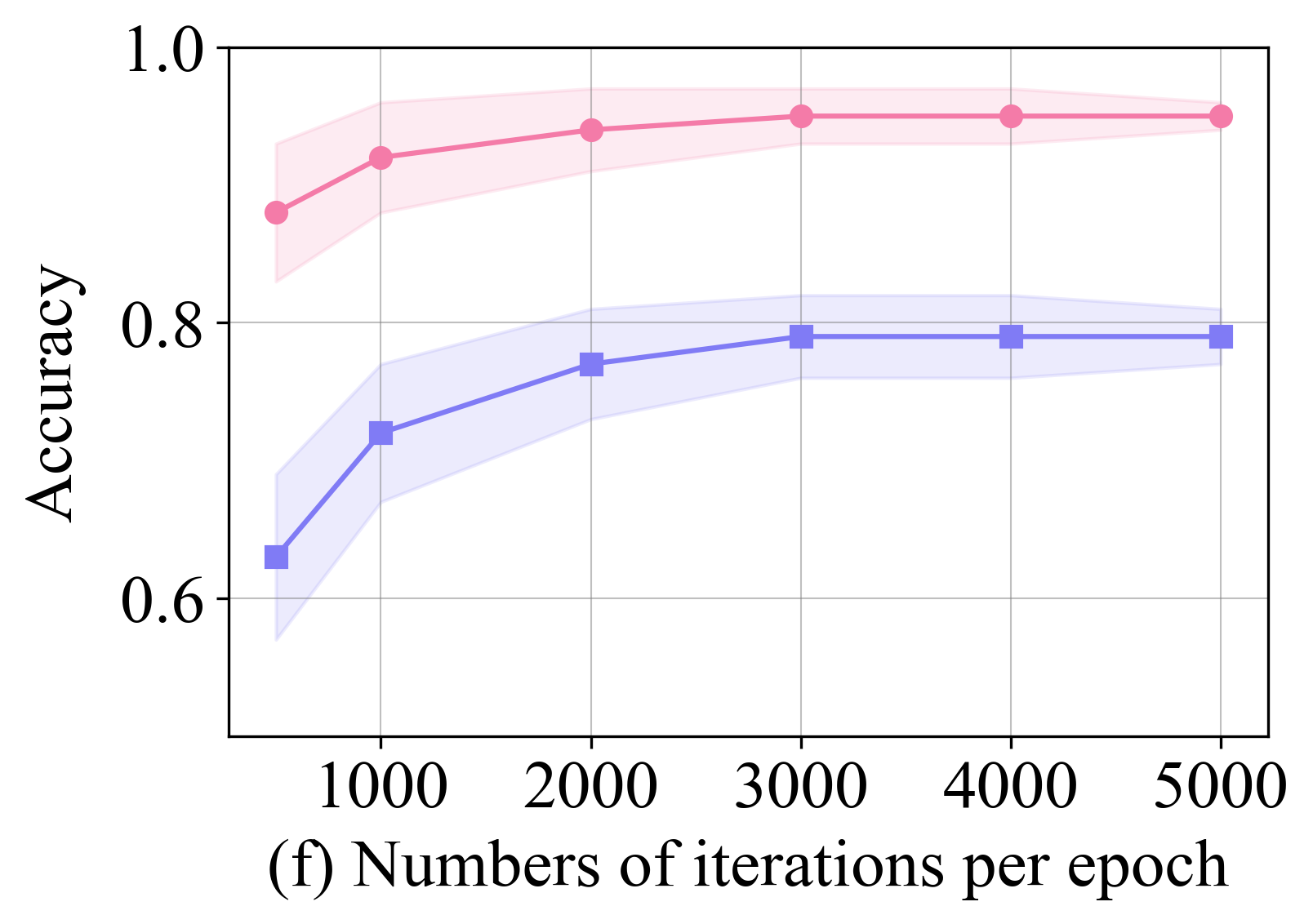}}
    
    \vspace{2mm}
    \includegraphics[width=0.5\textwidth]{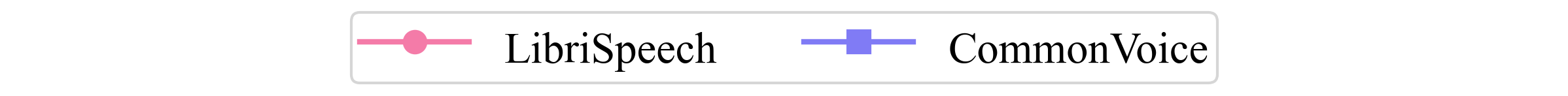}
    \caption{Detection accuracy for CLAP model (LibriSpeech) under various parameters.}
    \label{fig:clapablation}
\end{figure*}

\begin{figure*}[tp]
    \centering
    \includegraphics[width=0.9 \linewidth]{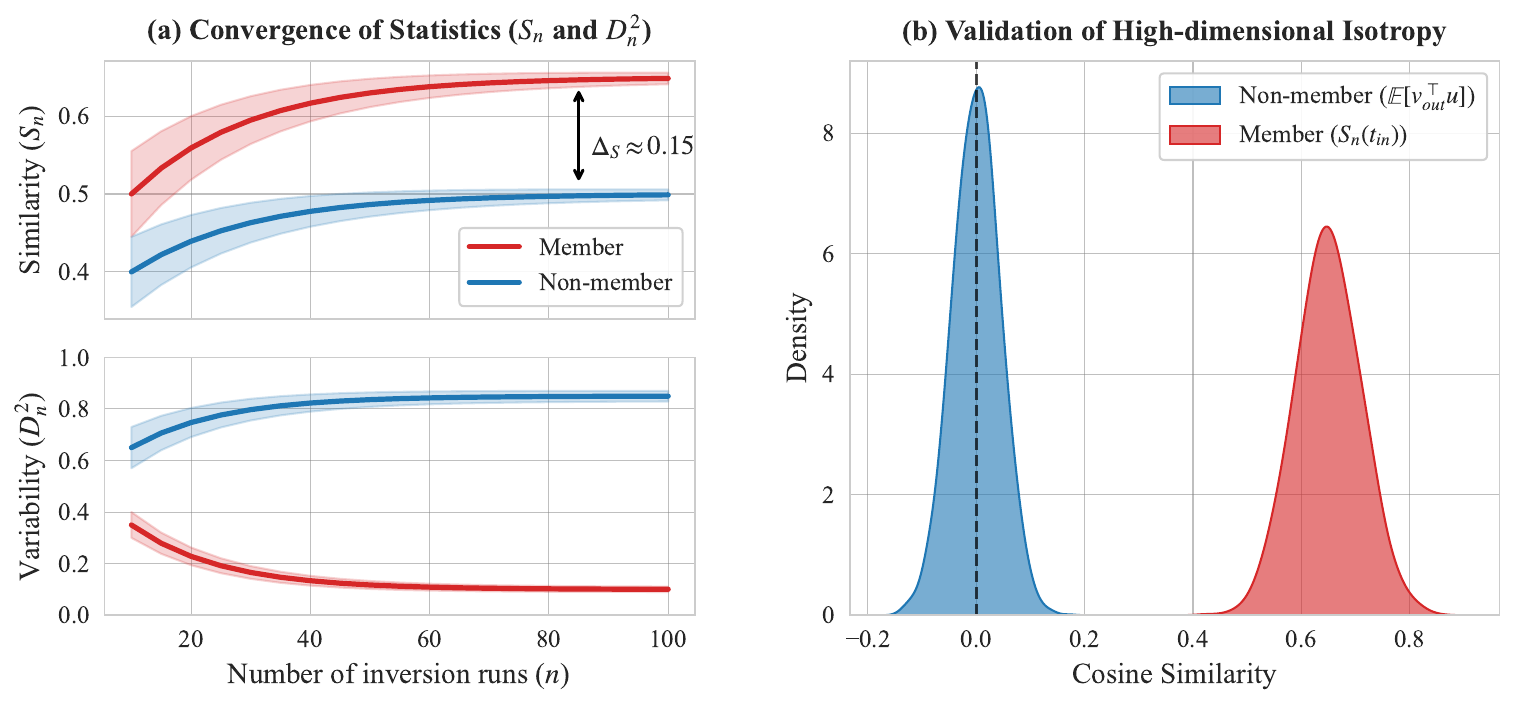} 
    \caption{Empirical validation of geometric separation. (a) Convergence of similarity ($S_n$) and variability ($D_n^2$) as inversion runs ($n$) increase. (b) Cosine similarity distributions confirming the high-dimensional isotropy of non-members versus the directional bias of members.}
    \label{fig:theory_validation}
\end{figure*}

\subsection{Ablation Study}

We conduct ablation studies in the following four perspectives to quantify how key design choices affect detection performance on both CLIP and CLAP (Figure~\ref{fig:clipablation} and Figure~\ref{fig:clapablation}).

\paragraph{Number of gibberish texts} Figure \ref{fig:clipablation}(a) and Figure \ref{fig:clapablation}(a) show that accuracy improves as the number of randomly generated gibberish strings ($\ell$) increases, and stabilizes once $\ell \ge 50$ for both modalities. With fewer than 20 strings, training becomes noisy and less reliable. Beyond 50 strings, the gain is marginal (typically $<0.5\%$ in accuracy), so we use $\ell=100$ as a practical choice that balances performance and overhead.

\paragraph{Real sample enhancement} As shown in Figure \ref{fig:clipablation}(b) and Figure \ref{fig:clapablation}(b), adding even a single real sample consistently improves accuracy (by roughly $1.5-3.5\%$, depending on the dataset and training set size). In contrast, using more than one real sample provides limited additional benefit (typically $<1\%$), suggesting that real samples are sufficient but not necessary to capture the modality-specific structure needed by the clustering-based vote.

\paragraph{Detection threshold analysis} We study the sensitivity of the ensemble decision to the voting threshold. As shown in Figure \ref{fig:clipablation}(c) and Figure \ref{fig:clapablation}(c), text-only inputs achieve the best accuracy with a three-vote threshold, whereas incorporating real samples favors a four-vote threshold (Figure \ref{fig:clipablation}(d) and Figure \ref{fig:clapablation}(d)). A higher threshold tends to miss true anomalies (lower recall), while a lower threshold increases false positives. The selected thresholds provide a robust balance between precision and recall across both modalities.

\paragraph{Optimization parameters} We examine two hyperparameters used during feature extraction: the number of epochs (Figure \ref{fig:clipablation}(e) and Figure \ref{fig:clapablation}(e) and the number of iterations per epoch (Figure \ref{fig:clipablation}(f) and Figure \ref{fig:clapablation}(f)). Overall, performance peaks at 100 epochs with 1000 iterations per epoch for CLIP, and 50 epochs with 1000 iterations per epoch for CLAP. Increasing iterations beyond 100 per epoch yields negligible gains (typically $<0.3\%$ in accuracy) while substantially increasing computation. Likewise, using more than 100 epochs leads to diminishing returns.

\subsection{Empirical Validation of Geometric Separation}
To corroborate our theoretical guarantees (Theorem 3.1), we empirically validate the geometric behaviors of members and non-members in the latent space (Figure \ref{fig:theory_validation}).

\textbf{Convergence of Statistics ($S_n$ and $D_n^2$):} 
Theoretically, members exhibit higher similarity ($S_n$) and lower variability ($D_n^2$) by concentrating on the memorized prototype. Empirically, as inversion runs ($n$) increase (Figure \ref{fig:theory_validation}a), the variance of $S_n$ diminishes, and the member/non-member gap ($\Delta_S$) strictly stabilizes at $\approx 0.15$. Conversely, non-members' $D_n^2$ converges to a significantly higher plateau, validating our dispersion bounds (Lemma B.5).

\textbf{Verification of Non-member Isotropy:} Assumption B.2 posits that non-members exhibit high-dimensional isotropy. We verify this by computing the cosine similarity between non-member embeddings ($v_{out}$) and random Gaussian vectors ($u$). As Figure \ref{fig:theory_validation}b shows, this distribution strictly centers at 0 with minimal variance, confirming the isotropy condition. In contrast, members display a strong directional bias, demonstrating that $S_n$ and $D_n^2$ effectively capture fundamental geometric disparities rather than random noise.

%% file: 5_defense_revised.tex
\section{Robustness against Potential Defenses}

To comprehensively assess the practical effectiveness of our approach, we evaluate UMID's robustness against potential defenses that model owners might deploy to mitigate membership inference attacks.

\begin{table}[t]
\renewcommand{\arraystretch}{0.9} 
\centering
\caption{UMID robustness evaluation under differential privacy defense ($\epsilon=1.0$, $\delta=10^{-5}$).}
\label{table:defense1}
\resizebox{0.5\textwidth}{!}{
\begin{tabular}{@{}ccc|c|c@{}}
\toprule
\multirow{2}{*}[-3pt]{\textbf{Dataset}} & \multirow{2}{*}[-3pt]{\textbf{Training samples}} & \multirow{2}{*}[-3pt]{\textbf{Defense}} & \textbf{Effectiveness} & \textbf{Efficiency} \\
\cmidrule(lr){4-4} \cmidrule(lr){5-5}
& & & \textbf{Accuracy} & \textbf{Time} \\
\midrule
\multirow{4}{*}[-3pt]{\shortstack{\textbf{CelebA}\\\textbf{(CLIP)}}} & \multirow{2}{*}{one-shot} & Without & .9535 & 0.685s \\
& & DP & .8742 & 0.688s \\
\cmidrule{2-5}
& \multirow{2}{*}{many-shot} & Without & .9562 & 0.691s \\
& & DP & .8595 & 0.689s \\
\midrule
\multirow{4}{*}[-3pt]{\shortstack{\textbf{LibriSpeech}\\\textbf{(CLAP)}}} & \multirow{2}{*}{one-shot} & Without & .9354 & 0.647s \\
& & DP & .8788 & 0.651s \\
\cmidrule{2-5}
& \multirow{2}{*}{many-shot} & Without & .9524 & 0.659s \\
& & DP & .8340 & 0.663s \\
\midrule
\multirow{4}{*}[-3pt]{\shortstack{\textbf{CommonVoice}\\\textbf{(CLAP)}}} & \multirow{2}{*}{one-shot} & Without & .8356 & 0.753s \\
& & DP & .7590 & 0.749s \\
\cmidrule{2-5}
& \multirow{2}{*}{many-shot} & Without & .8569 & 0.776s \\
& & DP & .7915 & 0.772s \\
\bottomrule
\end{tabular}
}
\end{table}

\begin{table}[t]
\renewcommand{\arraystretch}{0.9} 
\centering
\caption{UMID robustness evaluation under covert gibberish defense.}
\label{table:defense2}
\resizebox{0.5\textwidth}{!}{
\begin{tabular}{@{}ccc|c|c@{}}
\toprule
\multirow{2}{*}[-3pt]{\textbf{Dataset}} & \multirow{2}{*}[-3pt]{\textbf{Training samples}} & \multirow{2}{*}[-3pt]{\textbf{Defense}} & \textbf{Effectiveness} & \textbf{Efficiency} \\
\cmidrule(lr){4-4} \cmidrule(lr){5-5}
& & & \textbf{Accuracy} & \textbf{Time} \\
\midrule
\multirow{4}{*}[-3pt]{\shortstack{\textbf{CelebA}\\\textbf{(CLIP)}}} & \multirow{2}{*}{one-shot} & Without & .9535 & 0.685s \\
& & CG & .9412 & 0.689s \\
\cmidrule{2-5}
& \multirow{2}{*}{many-shot} & Without & .9562 & 0.691s \\
& & CG & .9385 & 0.687s \\
\midrule
\multirow{4}{*}[-3pt]{\shortstack{\textbf{LibriSpeech}\\\textbf{(CLAP)}}} & \multirow{2}{*}{one-shot} & Without & .9354 & 0.647s \\
& & CG & .9210 & 0.652s \\
\cmidrule{2-5}
& \multirow{2}{*}{many-shot} & Without & .9524 & 0.659s \\
& & CG & .9395 & 0.655s \\
\midrule
\multirow{4}{*}[-3pt]{\shortstack{\textbf{CommonVoice}\\\textbf{(CLAP)}}} & \multirow{2}{*}{one-shot} & Without & .8356 & 0.753s \\
& & CG & .8220 & 0.756s \\
\cmidrule{2-5}
& \multirow{2}{*}{many-shot} & Without & .8569 & 0.776s \\
& & CG & .8415 & 0.781s \\
\bottomrule
\end{tabular}
}
\end{table}


\paragraph{Gaussian noise} We apply differential privacy to embedding computations by adding calibrated Gaussian noise. Specifically, for each query, noise sampled from $\mathcal{N}(0, \sigma^2 I_d)$ is added to returned embeddings, where $\sigma$ is calibrated to achieve $(\epsilon, \delta)$-differential privacy with $\epsilon=1.0$ and $\delta=10^{-5}$. Table~\ref{table:defense1} reports the robustness of UMID against a Differential Privacy (DP) defense in both the image and audio modalities. After injecting calibrated Gaussian noise ($\epsilon=1.0$), UMID exhibits only a moderate degradation: accuracy drops by roughly $6\%-12\%$, depending on the dataset and the number of samples. Concretely, for CLIP (ResNet-50x4) on CelebA, UMID still achieves $87.42\%$ accuracy with 1 sample and $85.95\%$ with 75 samples under the defense. Likewise, for CLAP on LibriSpeech, performance remains high at $87.88\%$ and $83.40\%$, and on CommonVoice it remains competitive at $75.90\%$ and $79.15\%$. Importantly, runtime is essentially unchanged across all settings (averaging $0.7$s), indicating that DP introduces negligible additional computational overhead for the attacker. Overall, these results suggest that while DP noise reduces the fidelity of embedding-based leakage, UMID’s reliance on geometric vulnerabilities remains effective under a moderate privacy budget; in practice, stronger privacy budgets ($\epsilon < 1.0$) and/or complementary defenses may be necessary to fully secure these models.

\paragraph{Covert gibberish generation} In practice, a target model may incorporate input filters that flag overtly anomalous queries (e.g., nonsensical gibberish) and return misleading outputs, causing UMID to misclassify the presence of PII. To obtain more covert queries, we generate gibberish-like strings that remain visually and phonotactically plausible by replacing a small subset of characters with syllables drawn from a different language. For example, the detector can synthesize query texts by mixing English-style names with syllables from Arabic medical terminology. Concretely, we first prompt an LLM (e.g., GPT-3.5-turbo) to produce lists of common English initial and final syllables, then post-process the lists to remove duplicates and encourage coverage. We randomly concatenate syllables to form pseudo-English names (e.g., Karinix'', Zylogene'', ``Renotyl''), and explicitly verify their novelty against a name lexicon to avoid collisions with real entities. Finally, we prompt the LLM to instantiate these syllable templates into full strings, yielding covert gibberish that closely resembles authentic names (see Table~\ref{table:covertgibberish}). Table~\ref{table:defense2} summarizes UMID’s robustness to the Covert Gibberish (CG) defense for both CLIP and CLAP. UMID remains highly resilient: CG reduces accuracy by only $1.23\%$--$1.54\%$ across all datasets and training sample sizes. For example, on CelebA with 75 training samples, UMID still achieves $93.85\%$ accuracy under CG. Running time is also essentially unchanged (about $0.65$--$0.78$s), suggesting that CG neither introduces meaningful computational overhead nor effectively disrupts the identification procedure. Overall, UMID preserves both effectiveness and efficiency under covert, text-like perturbations.

%% file: 6_conclusion.tex
\section{Conclusion}
\label{sec:conclusion}
This work revisits membership inference for contrastive pretraining models through the lens of identity-level auditing under a unimodal privacy constraint, motivated by the practical limitations of existing MIAs: prohibitive shadow-model training and the need to query targets with paired biometric inputs. In this paper, we introduced UMID, a text-only auditing framework that avoids both requirements by leveraging text-guided cross-modal latent inversion. The key insight is that member identities induce reconstructions that are not only better aligned with the queried text but also more consistent across randomized inversions. UMID operationalizes this intuition via two statistics (similarity and variability) and a lightweight non-member reference built from synthetic gibberish, enabling membership decisions through an ensemble of unsupervised anomaly detectors. Extensive experiments on CLIP and CLAP across architectures and exposure regimes validate that UMID consistently outperforms prior baselines while remaining highly efficient, making it suitable for third-party auditing where sensitive biometrics are unavailable or impermissible to submit. Overall, UMID provides a practical and privacy-compliant pathway to assess PII memorization in widely deployed multimodal foundation models.


%% file: 7_contributions.tex
\section*{Author Contributions}
\label{7_contributions}

Ruoxi Cheng contributed to the conceptualization and methodology, performed the preliminary experimental validation, and was responsible for writing the original draft. Yizhong Ding conducted the large-scale experimental validation and formal analysis. Jian Zhao and Tianle Zhang were responsible for the methodology refinement and validation. Hongyi Zhang and Haoxuan Ma contributed to the writing, reviewing, and editing of the manuscript. Yiyan Huang provided expertise in mathematical modeling and formal analysis, contributed to the writing, reviewing, and editing, and was responsible for funding acquisition and supervision. Xuelong Li provided conceptual guidance, project administration, and overall supervision. All authors have read and agreed to the published version of the manuscript.

%% file: 9_appendix.tex
\clearpage
\appendix

\section{Gibberish Samples}
\label{sec:gibberish}

Table \ref{table:gibberish} and Table \ref{table:covertgibberish} present examples of randomly generated gibberish and covert gibberish that mimics authentic names, respectively.

\begin{table}[!h]
\centering
\footnotesize 
\caption{Samples of randomly generated textual gibberish generated by GPT-3.5-turbo.}
\label{table:gibberish}
\renewcommand{\arraystretch}{1.3} 
\setlength{\tabcolsep}{6pt} 
\begin{tabular}{@{}lll@{}}
\toprule
b+dh43u!f9de545w & 53e(s3erg24=pnI<S & fe3\_5fg;@f34f.e2w/ \\
\midrule
4teh<E\{434fef234ter & 5gt\%b@5435-hgF & \#4c3rd4scd324g\\
\bottomrule
\end{tabular}
\end{table}

\begin{table}[!h]
\centering
\footnotesize 
\caption{Covert gibberish that seem to be real names generated by GPT-3.5-turbo.}
\label{table:covertgibberish}
\renewcommand{\arraystretch}{1.2} 
\setlength{\tabcolsep}{8pt} 
\begin{tabular}{@{}lllll@{}}
\toprule
Karinix  & Zylogene & Glycogenyx & Zylotrax & Vexilith\\
\midrule 
Exodynix & Novylith & Glycosyne  & Xenolynx & Rynexis\\
\bottomrule
\end{tabular}
\end{table}

\section{Theoretical Guarantees for Geometric Separation}
\label{sec:appendix_proof}

In this appendix, we provide the detailed proofs for the theoretical claims made in Section \ref{sec:theory}. We establish that under mild geometric assumptions about the contrastive latent space, the proposed statistics $S_n$ and $D_n^2$ fundamentally separate member identities from non-members with high probability.

\subsection{Preliminaries and Assumptions}

We fix a text $t$ with embedding $v_t = \phi(t) \in \mathbb{R}^d$. The randomized modality inversion produces $n$ independent optimized embeddings $v^{(1)}, \dots, v^{(n)}$. Recall the definitions of our statistics:
\begin{align}
    \bar{v}_n &:= \frac{1}{n}\sum_{i=1}^n v^{(i)}, \quad S_n := v_t^\top \bar{v}_n, \nonumber \\
    D_n^2 &:= \frac{1}{n} \sum_{i=1}^n \|v^{(i)}-\bar{v}_n\|_2^2.
\end{align}

We model the latent space geometry using $M$ unit-norm prototypes $\{\mu_k\}_{k=1}^M \subset \mathbb{S}^{d-1}$. We restate the core assumptions formally for the proof:

\begin{assumption}[Prototype orthogonality] \label{assump:proto}
	The prototypes $\mu_1,\dots,\mu_M$ are approximately orthogonal. Specifically, let $\rho_d := \max_{y\neq z} |\mu_y^\top \mu_z|$. For $\delta_1 \in (0,1)$, with probability at least $1-\delta_1$, $\rho_d \le \sqrt{\frac{4\log(2M^2/\delta_1)}{d}}$.
\end{assumption}

\begin{assumption}[Text-prototype geometry] \label{assump:geometry_app}
\textbf{Non-member:} For a non-member $t_{\mathrm{out}}$, its embedding $v_{\mathrm{out}}$ is isotropic relative to random directions. For any unit $u$, $\mathbb{P}(|v_{\mathrm{out}}^\top u| \ge \epsilon) \le 2\exp(-c d \epsilon^2)$. \textbf{Member:} For a member $t_{\mathrm{in}}$, there exists a prototype $y^\star$ such that $v_{\mathrm{in}}^\top \mu_{y^\star} \ge \gamma_{\mathrm{in}} > 0$.
\end{assumption}

\begin{assumption}[Randomized Optimization] \label{assump:opt_app}
	Each run $i$ selects a prototype index $C^{(i)} \sim p(t)$ and lands near it: $v^{(i)} = \mu_{C^{(i)}} + \Delta^{(i)}$, with $\|\Delta^{(i)}\|_2 \le \varepsilon_{\mathrm{opt}}$ with high probability. \textbf{For members}, $p(t_{\mathrm{in}})$ concentrates on $y^\star$: $p_{y^\star} \ge 1 - \delta^\star$ (where $\delta^\star$ is small). \textbf{For non-members}, $p(t_{\mathrm{out}})$ is nearly uniform: $p_y \approx 1/M$.
\end{assumption}

\subsection{Population-Level Separation}

We first analyze the population statistics defined as $n \to \infty$ (assuming $\varepsilon_{\mathrm{opt}} \to 0$):
$S_\infty(t) := v_t^\top m(t)$ and $D_\infty^2(t) := 1 - \|m(t)\|_2^2$, where $m(t) := \sum_{y=1}^M p_y(t)\mu_y$ is the mean prototype vector.

\begin{proposition}[Population separation gaps] \label{prop:pop_gaps}
	Under the assumptions above, with high probability over prototypes:
	\begin{enumerate}
		\item \textbf{Member:} $S_\infty(t_{\mathrm{in}}) \ge \gamma_{\mathrm{in}} - 2\delta^\star$ and $D_\infty^2(t_{\mathrm{in}}) \le 2\delta^\star + 3\rho_d \delta^\star \approx 0$.
		\item \textbf{Non-member:} $|S_\infty(t_{\mathrm{out}})| \le O(d^{-1/2}) \approx 0$ and $D_\infty^2(t_{\mathrm{out}}) \ge 1 - \frac{1}{M} - \rho_d \approx 1$.
	\end{enumerate}
\end{proposition}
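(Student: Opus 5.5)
The plan is to expand both population statistics through the mean prototype vector $m(t)=\sum_y p_y(t)\mu_y$ and control the cross terms using Assumption~\ref{assump:proto}. Throughout, we work on the event $\rho_d\le\sqrt{4\log(2M^2/\delta_1)/d}$. We treat $v_t$ as unit norm, which is consistent with the cosine objective. One identity does most of the work:
\begin{equation*}
\|m(t)\|_2^2=\sum_y p_y^2+\sum_{y\neq z}p_yp_z\,\mu_y^\top\mu_z ,
\end{equation*}
and the second sum lies in $[-\rho_d(1-\sum_y p_y^2),\,\rho_d(1-\sum_y p_y^2)]$. We also note that $D_\infty^2=1-\|m\|^2$ is exactly the limit of $D_n^2$ when $v^{(i)}=\mu_{C^{(i)}}$, because each $\|\mu_y\|=1$.

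\textbf{Member case.} Write $m=p_{y^\star}\mu_{y^\star}+r$, where $r=\sum_{y\ne y^\star}p_y\mu_y$ and $\|r\|\le\sum_{y\neq y^\star}p_y\le\delta^\star$.
\begin{itemize}
\item For the similarity, $S_\infty=p_{y^\star}v_{\rm in}^\top\mu_{y^\star}+v_{\rm in}^\top r\ge(1-\delta^\star)\gamma_{\rm in}-\delta^\star$. Since $\gamma_{\rm in}\le1$, this is at least $\gamma_{\rm in}-2\delta^\star$.
\item For the dispersion, $\|m\|^2\ge p_{y^\star}^2-2p_{y^\star}\delta^\star\rho_d-(\text{cross terms among }y\neq y^\star)$.
\item A cruder route is $\|m\|\ge p_{y^\star}-\|r\|\ge1-2\delta^\star$. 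This gives $1-\|m\|^2\le4\delta^\star$, which is the same order as the claim.
\item To reach the stated constant $2\delta^\star+3\rho_d\delta^\star$, use the identity above instead: $1-\|m\|^2\le(1-\sum p_y^2)(1+\rho_d)$. Then bound $1-\sum p_y^2\le1-p_{y^\star}^2\le2\delta^\star$. This yields $2\delta^\star+2\rho_d\delta^\star$, which sits inside the stated bound.
\end{itemize}

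\textbf{Non-member case.}
\begin{itemize}
\item For the similarity, $S_\infty(t_{\rm out})=\sum_y p_y v_{\rm out}^\top\mu_y$, so $|S_\infty|\le\max_y|v_{\rm out}^\top\mu_y|$.
\item Apply the isotropy tail bound of Assumption~\ref{assump:geometry_app} to each of the $M$ unit vectors $\mu_y$ and take a union bound. With probability $1-2M e^{-cd\epsilon^2}$, every $|v_{\rm out}^\top\mu_y|\le\epsilon$. Choosing $\epsilon=\sqrt{\log(2M/\delta_2)/(cd)}$ gives $O(d^{-1/2})$ up to logarithmic factors.
\item For the dispersion, use the identity again with near-uniform $p$: $\|m\|^2\le\sum p_y^2+\rho_d\le\frac1M+\rho_d$ (up to the $p_y\approx1/M$ slack). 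Hence $D_\infty^2\ge1-\frac1M-\rho_d$.
\end{itemize}
A final union bound over the prototype event $\delta_1$ and the isotropy event $\delta_2$ gives the ``with high probability'' qualifier.

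\textbf{Expected obstacle.} The main difficulty is the bookkeeping around the informal parts of the assumptions. The condition $p_y\approx1/M$ must be made quantitative, for example $\sum_y p_y^2\le\frac1M+\eta$, and that $\eta$ then appears in the $D_\infty^2$ bound. The statement also implicitly requires $v_t$ to be unit norm. I would state both conventions explicitly at the start and absorb $\eta$ into the ``$\approx$''. The tight constant in the member dispersion bound is the other delicate point, and the identity-based bound handles it cleanly.
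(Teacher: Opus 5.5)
Your proposal is correct and follows the same route as the paper. You expand both statistics through $m(t)=\sum_y p_y\mu_y$ and bound the member similarity with $|v^\top\mu_y|\le 1$. You control the cross terms of $\|m\|_2^2$ by $\rho_d$ for both dispersion bounds, and you use isotropy for the non-member similarity.

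Two refinements go beyond the paper, and both are sound: the sharper cross-term bound $\rho_d(1-\sum_y p_y^2)$ makes the member dispersion bound $2\delta^\star(1+\rho_d)$ explicit where the paper only says ``$\approx 2\delta^\star$''; and the union bound over the $M$ prototypes costs a $\sqrt{\log M}$ factor but avoids the paper's assumption that $v_{\mathrm{out}}$ is independent of $m(t_{\mathrm{out}})$.
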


\begin{proof}
	\textbf{For Member:} 
	$S_\infty = \sum p_y v_{\mathrm{in}}^\top \mu_y = p_{y^\star} v_{\mathrm{in}}^\top \mu_{y^\star} + \sum_{y \neq y^\star} p_y v_{\mathrm{in}}^\top \mu_y$. 
	Using $p_{y^\star} \ge 1-\delta^\star$, $v_{\mathrm{in}}^\top \mu_{y^\star} \ge \gamma_{\mathrm{in}}$, and trivial bounds $|v^\top \mu| \le 1$, we get $S_\infty \ge (1-\delta^\star)\gamma_{\mathrm{in}} - \delta^\star \approx \gamma_{\mathrm{in}}$.
	For dispersion, $D_\infty^2 = 1 - \|p_{y^\star}\mu_{y^\star} + \sum_{y\neq y^\star} p_y \mu_y\|_2^2$. The cross-terms are bounded by $\rho_d$. Dominant term is $1 - p_{y^\star}^2 \approx 1-(1-\delta^\star)^2 \approx 2\delta^\star$.
	
	\textbf{For Non-member:}
	$S_\infty = v_{\mathrm{out}}^\top m(t_{\mathrm{out}})$. Since $v_{\mathrm{out}}$ is isotropic and independent of $m(t_{\mathrm{out}})$, $S_\infty$ concentrates around 0 with rate $d^{-1/2}$ (Assumption \ref{assump:geometry_app}).
	For dispersion, $\|m(t_{\mathrm{out}})\|_2^2 = \|\sum p_y \mu_y\|_2^2 = \sum p_y^2 + \sum_{y \neq z} p_y p_z \mu_y^\top \mu_z$. With $p_y \approx 1/M$, $\sum p_y^2 \approx 1/M$. Cross terms are bounded by $\rho_d (\sum p_y)^2 = \rho_d$. Thus $D_\infty^2 \ge 1 - (1/M + \rho_d)$.
\end{proof}

\subsection{Finite-Sample Concentration}

We now show that empirical statistics $(S_n, D_n^2)$ converge to population values as $O(1/\sqrt{n})$.
Decompose the empirical mean $\bar v_n = \tilde m_n + \bar \Delta_n$, where $\tilde m_n = \frac{1}{n}\sum \mu_{C^{(i)}}$ is the average of selected prototypes and $\bar \Delta_n$ is the average optimization residual.

\begin{lemma}[Concentration Bounds] \label{lem:conc}
	Conditioned on small residuals $\|\Delta^{(i)}\| \le \varepsilon_{\mathrm{opt}}$, for any $\epsilon > 0$:
	\begin{align}
		\mathbb{P}(|S_n - S_\infty| \ge \epsilon + \varepsilon_{\mathrm{opt}}) &\le 2\exp(-2n\epsilon^2), \\
		\mathbb{P}(|D_n^2 - D_\infty^2| \ge 4\epsilon + C\varepsilon_{\mathrm{opt}}) &\le 2\exp(-n\epsilon^2).
	\end{align}
\end{lemma}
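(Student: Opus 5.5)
We prove Lemma~\ref{lem:conc} by splitting each statistic into a prototype part and a residual part, using the decomposition $\bar v_n=\tilde m_n+\bar\Delta_n$. The prototype part is an average of i.i.d.\ bounded terms, so Hoeffding's inequality applies. The residual part is deterministically bounded by $\varepsilon_{\mathrm{opt}}$ on the conditioning event. Throughout we treat the prototype indices $C^{(1)},\dots,C^{(n)}$ as i.i.d.\ draws from $p(t)$. Following the normalization implicit in Proposition~\ref{prop:pop_gaps}, we take $\|v_t\|=1$ and $\|\mu_y\|=1$.

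\emph{Similarity.} Write
\[
S_n-S_\infty=\bigl(v_t^\top\tilde m_n-v_t^\top m(t)\bigr)+v_t^\top\bar\Delta_n .
\]
The variables $X_i:=v_t^\top\mu_{C^{(i)}}$ are i.i.d., take values in $[-1,1]$, and have mean $v_t^\top m(t)=S_\infty$. Hoeffding's inequality then gives $\mathbb{P}(|\frac1n\sum_i X_i-S_\infty|\ge\epsilon)\le 2\exp(-n\epsilon^2/2)$. To match the stated constant $2\exp(-2n\epsilon^2)$, one needs range length $1$, for instance by assuming $v_t^\top\mu_y\in[0,1]$, or by rescaling $\epsilon$; I would note this constant bookkeeping explicitly. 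The residual term obeys $|v_t^\top\bar\Delta_n|\le\|\bar\Delta_n\|\le\frac1n\sum_i\|\Delta^{(i)}\|\le\varepsilon_{\mathrm{opt}}$ by Cauchy--Schwarz and the triangle inequality. The triangle inequality then yields the first bound.

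\emph{Dispersion.} Use the identity
\[
D_n^2=\frac1n\sum_i\|v^{(i)}\|^2-\|\bar v_n\|^2 .
\]
For the first term, $\|v^{(i)}\|^2=1+2\mu_{C^{(i)}}^\top\Delta^{(i)}+\|\Delta^{(i)}\|^2$, so it lies within $2\varepsilon_{\mathrm{opt}}+\varepsilon_{\mathrm{opt}}^2$ of $1$. For the second term, $\bigl|\|\bar v_n\|^2-\|\tilde m_n\|^2\bigr|\le(\|\bar v_n\|+\|\tilde m_n\|)\|\bar\Delta_n\|\le(2+\varepsilon_{\mathrm{opt}})\varepsilon_{\mathrm{opt}}$. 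These two pieces account for $C\varepsilon_{\mathrm{opt}}$ with $C\approx 5$ when $\varepsilon_{\mathrm{opt}}\le1$. It remains to control $\bigl|\|\tilde m_n\|^2-\|m\|^2\bigr|\le(\|\tilde m_n\|+\|m\|)\,\|\tilde m_n-m\|\le 2\|\tilde m_n-m\|$.

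\emph{Main obstacle: the vector mean.} The hard step is concentrating the vector-valued mean $\tilde m_n-m$, because Hoeffding's inequality is scalar. My first attempt would use the bounded-differences (McDiarmid) inequality on $f(C^{(1)},\dots,C^{(n)})=\|\tilde m_n\|^2$ directly. Changing one index changes $\tilde m_n$ by at most $2/n$ in norm, so $f$ changes by at most $4/n$. McDiarmid then gives $\mathbb{P}(|f-\mathbb{E}f|\ge t)\le2\exp(-nt^2/8)$. The mean satisfies $\mathbb{E}f=\|m\|^2+\frac{1-\|m\|^2}{n}$, so it carries a bias of order $1/n$. I would absorb this bias into $\epsilon$ once $n\gtrsim\epsilon^{-1}$, which holds in the regime $n=\Omega(\epsilon^{-2})$ of Theorem~\ref{thm:main}. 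Setting $t=\epsilon$, and slack up to $4\epsilon$, produces a tail of the form $2\exp(-c\,n\epsilon^2)$.

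Matching the exact constant in $2\exp(-n\epsilon^2)$ requires choosing the deviation split so that $8\cdot(\text{allotment})^{-2}$ works out. Allotting $4\epsilon$ total leaves enough room, since $\exp(-n(2\sqrt2\epsilon)^2/8)=\exp(-n\epsilon^2)$. Combining all pieces with a union-free triangle inequality gives the second bound.
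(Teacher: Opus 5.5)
Your treatment of $S_n$ is the same as the paper's: Hoeffding on $Z_i=v_t^\top\mu_{C^{(i)}}$, plus $|v_t^\top\bar\Delta_n|\le\varepsilon_{\mathrm{opt}}$. Your constant check is also right. The paper applies Hoeffding to variables in $[-1,1]$ yet writes $2\exp(-2n\epsilon^2)$, but range length $2$ gives only $2\exp(-n\epsilon^2/2)$. Boundedness in $[-1,1]$ alone cannot give more: a Rademacher $Z_1$ with $n=1$, $\epsilon=0.9$, $\varepsilon_{\mathrm{opt}}=0.1$ violates the stated bound. The weaker constant is all Theorem~\ref{thm:main} needs.

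For $D_n^2$ you take a genuinely different route. The paper bounds $|D_n^2-D_\infty^2|\le 2\|\tilde m_n-m(t)\|_2+O(\varepsilon_{\mathrm{opt}})$ and then appeals to ``Hoeffding'' for the vector mean $\tilde m_n$. Making that rigorous needs either a dimension-free Hilbert-space inequality such as Pinelis' (increments bounded by $2/n$), which at deviation $4\epsilon$ yields only $2\exp(-n\epsilon^2/2)$, or McDiarmid on $\|\tilde m_n-m\|_2$, which leaves an additive $\mathbb{E}\|\tilde m_n-m\|_2\le n^{-1/2}$ to absorb. You instead apply McDiarmid to the scalar $\|\tilde m_n\|^2$, so no vector concentration is needed. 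Its fluctuation is driven by the linear part $2m^\top(\tilde m_n-m)$, and the quadratic part enters only as the bias $(1-\|m\|^2)/n$, which is of order $1/n$ rather than $n^{-1/2}$.

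The paper's reduction is shorter and gives a reusable bound on $\|\tilde m_n-m\|_2$. Yours is elementary and automatically dimension-free. With your split ($2\sqrt2\,\epsilon$ to the fluctuation, $(4-2\sqrt2)\epsilon$ to the bias), it recovers the stated $2\exp(-n\epsilon^2)$ exactly.

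One repair is needed. You justify absorbing the bias by the regime of Theorem~\ref{thm:main}, but the lemma is asserted for every $\epsilon>0$. The remaining case is trivial and should be stated. If $1/n>(4-2\sqrt2)\epsilon$ and $\epsilon\le\tfrac14$, then $n\epsilon^2<\ln 2$ and the right-hand side exceeds $1$. If $\epsilon>\tfrac14$, the event is empty, since $\|\tilde m_n\|^2,\|m\|^2\in[0,1]$ gives $\bigl|\|\tilde m_n\|^2-\|m\|^2\bigr|\le 1<4\epsilon$.
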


\begin{proof}
	\textbf{For $S_n$:} 
	$S_n = v_t^\top \tilde m_n + v_t^\top \bar \Delta_n$. The term $|v_t^\top \bar \Delta_n| \le \varepsilon_{\mathrm{opt}}$. The term $v_t^\top \tilde m_n = \frac{1}{n}\sum Z_i$ where $Z_i = v_t^\top \mu_{C^{(i)}}$ are i.i.d. variables bounded in $[-1, 1]$ with mean $S_\infty$. Hoeffding's inequality gives the result.
	
	\textbf{For $D_n^2$:}
	Using the identity $D_n^2 = \frac{1}{n}\sum \|v^{(i)}\|^2 - \|\bar v_n\|^2$, and decomposing $v^{(i)}$, we can bound $|D_n^2 - D_\infty^2| \le 2\|\tilde m_n - m(t)\|_2 + O(\varepsilon_{\mathrm{opt}})$. Since $\tilde m_n$ is the average of bounded i.i.d. vectors with mean $m(t)$, Hoeffding inequality yields the exponential tail bound.
\end{proof}

\subsection{Proof of Theorem \ref{thm:main}}

The goal is to show that with probability at least $1-\delta$, the empirical statistics $(S_n, D_n^2)$ for members and non-members fall on opposite sides of the thresholds $s_{\mathrm{thr}}$ and $d_{\mathrm{thr}}^2$. We structure the proof in three logical steps: (1) ensuring valid population geometry, (2) bounding the optimization and sampling errors, and (3) deriving the separation condition.

We allocate the total failure probability $\delta$ into three components: prototype geometry failure ($\delta_{\mathrm{proto}}$), optimization localization failure ($\delta_{\mathrm{opt}}'$), and sampling concentration failure ($\delta_{\mathrm{samp}}$), such that $\delta_{\mathrm{total}} = \delta_{\mathrm{proto}} + \delta_{\mathrm{opt}}' + \delta_{\mathrm{samp}} \le \delta$. We set each component to $\delta/4$.

\textbf{(1) Existence of population gaps.}
By Proposition \ref{prop:pop_gaps} and the prototype orthogonality assumption (Assumption \ref{assump:proto}), there exists an event $\mathcal{E}_{\mathrm{proto}}$ with $\mathbb{P}(\mathcal{E}_{\mathrm{proto}}) \ge 1-\delta/4$ where the population gaps are strictly positive. We define the minimum margin $\Gamma := \min\left\{ S_\infty(t_{\mathrm{in}}) - S_\infty(t_{\mathrm{out}}), \quad D_\infty^2(t_{\mathrm{out}}) - D_\infty^2(t_{\mathrm{in}}) \right\} > 0$.
The decision thresholds are defined as the midpoints: $s_{\mathrm{thr}} = \frac{1}{2}(S_\infty(t_{\mathrm{in}}) + S_\infty(t_{\mathrm{out}}))$ and $d_{\mathrm{thr}}^2 = \frac{1}{2}(D_\infty^2(t_{\mathrm{in}}) + D_\infty^2(t_{\mathrm{out}}))$.

\textbf{(2) Concentration of empirical statistics.}
We require the empirical statistics to concentrate around their population means within a radius smaller than $\Gamma/2$.
First, consider the \textit{optimization localization}. Let $\mathcal{E}_{\mathrm{opt}}(t)$ be the event that all $n$ runs satisfy $\|\Delta^{(i)}\|_2 \le \varepsilon_{\mathrm{opt}}$. By the union bound over $2n$ runs (for both $t_{\mathrm{in}}$ and $t_{\mathrm{out}}$), we have $\mathbb{P}(\mathcal{E}_{\mathrm{opt}}^c) \le 2n\delta_{\mathrm{opt}}$. We require $2n\delta_{\mathrm{opt}} \le \delta/4$.

Next, consider the \textit{sampling noise}. Conditioned on $\mathcal{E}_{\mathrm{opt}}$, Lemma \ref{lem:conc} guarantees concentration. We choose an error tolerance $\epsilon$ and residual $\varepsilon_{\mathrm{opt}}$ such that the total deviation is bounded by $\Gamma/4$. Specifically, let the target deviation bound be $\Delta_{\mathrm{dev}} := 4\epsilon + C_\Delta \varepsilon_{\mathrm{opt}}$. We require $\Delta_{\mathrm{dev}} \le \Gamma/2$.
By Lemma \ref{lem:conc}, the probability that any of the four statistics ($S_n, D_n^2$ for member/non-member) deviates by more than $\Delta_{\mathrm{dev}}$ is bounded by $8\exp(-n\epsilon^2)$. Setting this to $\delta/4$ yields the sample complexity requirement:
\[
n \ge \frac{1}{\epsilon^2} \log\left(\frac{32}{\delta}\right) = \Omega\left(\Gamma^{-2} \log(1/\delta)\right).
\]

\textbf{(3) Geometric separation condition.}
Let $\mathcal{E}_{\mathrm{good}}$ be the intersection of valid geometry, optimization localization, and sampling concentration. On this event, for the similarity statistic $S_n(t_{\mathrm{in}})$:
\begin{align*}
	S_n(t_{\mathrm{in}}) - s_{\mathrm{thr}} 
	&= S_n(t_{\mathrm{in}}) - \frac{1}{2}\left(S_\infty(t_{\mathrm{in}}) + S_\infty(t_{\mathrm{out}})\right) \\
	&= \frac{1}{2}\underbrace{\left(S_\infty(t_{\mathrm{in}}) - S_\infty(t_{\mathrm{out}})\right)}_{\ge \Gamma} - \underbrace{\left|S_n(t_{\mathrm{in}}) - S_\infty(t_{\mathrm{in}})\right|}_{\le \Delta_{\mathrm{dev}}} \\
	&\ge \frac{\Gamma}{2} - \frac{\Gamma}{2} = 0 \implies S_n(t_{\mathrm{in}}) \ge s_{\mathrm{thr}}.
\end{align*}
Similarly, for the non-member, $s_{\mathrm{thr}} - S_n(t_{\mathrm{out}}) \ge \frac{\Gamma}{2} - \Delta_{\mathrm{dev}} \ge 0$, implying $S_n(t_{\mathrm{out}}) \le s_{\mathrm{thr}}$.
The exact same logic applies to the dispersion $D_n^2$:
\begin{align*}
	d_{\mathrm{thr}}^2 - D_n^2(t_{\mathrm{in}}) \ge \frac{\Gamma}{2} - \Delta_{\mathrm{dev}} \ge 0 \implies D_n^2(t_{\mathrm{in}}) \le d_{\mathrm{thr}}^2.
\end{align*}
Thus, on $\mathcal{E}_{\mathrm{good}}$, the separation conditions hold simultaneously. Together, the probability of failure is bounded by the sum of failure probabilities for each step: $\mathbb{P}(\mathcal{E}_{\mathrm{good}}^c) \le \delta_{\mathrm{proto}} + \delta_{\mathrm{opt}}' + \delta_{\mathrm{samp}} \le \delta$. This completes the proof. \qed


